\theoremstyle{plain}
\newtheorem{theorem}{Theorem}[section]
\theoremstyle{definition}
\newtheorem{definition}[theorem]{Definition}
\theoremstyle{remark}
\icmltitlerunning{SecEmb: Sparsity-Aware Secure Federated Learning of On-Device Recommender System with Large Embedding}
\begin{document}

\twocolumn[
\icmltitle{SecEmb: Sparsity-Aware Secure Federated Learning of On-Device Recommender System with Large Embedding}

% It is OKAY to include author information, even for blind
% submissions: the style file will automatically remove it for you
% unless you've provided the [accepted] option to the icml2025
% package.

% List of affiliations: The first argument should be a (short)
% identifier you will use later to specify author affiliations
% Academic affiliations should list Department, University, City, Region, Country
% Industry affiliations should list Company, City, Region, Country

% You can specify symbols, otherwise they are numbered in order.
% Ideally, you should not use this facility. Affiliations will be numbered
% in order of appearance and this is the preferred way.
\icmlsetsymbol{equal}{*}

\begin{icmlauthorlist}
\icmlauthor{Peihua Mai}{yyy,equal}
\icmlauthor{Youlong Ding}{xxx,equal}
\icmlauthor{Ziyan Lyu}{zzz}
\icmlauthor{Minxin Du}{aaa}
\icmlauthor{Yan Pang}{yyy}
%\icmlauthor{}{sch}
\end{icmlauthorlist}

\icmlaffiliation{yyy}{National University of Singapore, Singapore}
\icmlaffiliation{xxx}{Hebrew University of Jerusalem, Jerusalem, Israel}
\icmlaffiliation{zzz}{NUS (Chongqing) Research Institute, Chongqing, China}
\icmlaffiliation{aaa}{Hong Kong Polytechnic University, Hong Kong SAR, China}
% \icmlaffiliation{*}{Equal contribution}

\icmlcorrespondingauthor{Yan Pang}{jamespang@nus.edu.sg}

% You may provide any keywords that you
% find helpful for describing your paper; these are used to populate
% the "keywords" metadata in the PDF but will not be shown in the document
\icmlkeywords{Machine Learning, ICML}

\vskip 0.3in
]

% this must go after the closing bracket ] following \twocolumn[ ...

% This command actually creates the footnote in the first column
% listing the affiliations and the copyright notice.
% The command takes one argument, which is text to display at the start of the footnote.
% The \icmlEqualContribution command is standard text for equal contribution.
% Remove it (just {}) if you do not need this facility.

% \printAffiliationsAndNotice{}  % leave blank if no need to mention equal contribution
\printAffiliationsAndNotice{\icmlEqualContribution} % otherwise use the standard text.

\begin{abstract}
Federated recommender system (FedRec) has emerged as a solution to protect user data through collaborative training techniques. A typical FedRec involves transmitting the full model and entire weight updates between edge devices and the server, causing significant burdens to devices with limited bandwidth and computational power. While the sparsity of embedding updates provides opportunity for payload optimization, existing sparsity-aware federated protocols generally sacrifice privacy for efficiency. A key challenge in designing a secure sparsity-aware efficient protocol is to protect the rated item indices from the server. In this paper, we propose a lossless secure recommender systems on sparse embedding updates (SecEmb). SecEmb reduces user payload while ensuring that the server learns no information about both rated item indices and individual updates except the aggregated model. The protocol consists of two correlated modules: (1) a privacy-preserving embedding retrieval module that allows users to download relevant embeddings from the server, and (2) an update aggregation module that securely aggregates updates at the server. Empirical analysis demonstrates that SecEmb reduces both download and upload communication costs by up to 90x and decreases user-side computation time by up to 70x compared with secure FedRec protocols. Additionally, it offers non-negligible utility advantages compared with lossy message compression methods. 
% This makes our framework a practical and scalable solution for deploying federated recommender systems on resource-constrained devices.
\end{abstract}

\section{Introduction}
Personalized recommendation systems (RecSys) model the interactions between users and items to uncover users' interests. To understand the underlying preferences of users and properties of items, various model-based approaches have been developed to learn hidden representations of both users and items \cite{koren2009matrix, xue2017deep, rendle2010factorization}. These methods embed users and items into fixed-size latent vectors, which are then used to predict interactions. The parameters of these latent vectors are known as user and item embeddings.

The development of personalized recommendation systems (RecSys) relies heavily on collecting user profiles and behavioral data, such as gender, age, and item interactions. However, the sensitive nature of this information often makes users hesitate to share it with service providers. Recent advancements in edge computing have provided a potential solution through federated learning (FL), which allows users to collaboratively train models on their local devices without exposing personal data \cite{mcmahan2017communication}. In a typical federated recommender system (FedRec), edge devices download the model from the server, perform local training, and upload weight updates for aggregation. During update aggregation, plaintext gradients can reveal user information. To mitigate this risk, secure aggregation (SecAgg) \cite{bonawitz2017practical} is adopted to prevent the server from inspecting individual updates during training.

Despite its effectiveness in preserving privacy, the above paradigm suffers significant communication and computation overheads as it requires transmission of the full model and complete updates \cite{bonawitz2019towards}. This issue becomes particularly problematic in latent factor-based RecSys, where the payload scales linearly with the number of items, potentially leading to substantial burden on resource-constrained edge devices with: (1) limited communication bandwidth, as the communication bandwidth between edge devices and the central server is often constrained; and (2) limited user computational power and storage, since edge devices generally have limited processing capabilities, memory, and storage compared to centralized servers. 

Fortunately, users typically interact with only a small subset of available items in practice, which presents two key opportunities for payload optimization. First, only the embeddings of the items a user interacts with are relevant for model updates. As a result, users can retrieve and store only these relevant item embeddings, significantly reducing computational and storage overhead. Secondly, the update vector is highly sparse in that only a small subset of the item embeddings are non-zero. Therefore, it is desirable to make the per-user communication and computation succinct, i.e., independent of or logarithmic in the item size. 

% Existing FedRec protocols generally incur user costs that scale linearly with model (or item) size, with limited research on leveraging model update sparsity for enhanced efficiency. 
A key challenge in designing sparsity-aware efficient FedRec is to ensure that the \textit{indices or coordinates of non-zero elements} remain hidden from the server. Existing sparsity-aware FL protocols reduce communication overhead at the cost of increased privacy leakage \cite{lin2022generic, liu2023efficient, lu2023top}, as they inevitably reveal the coordinates of non-zero elements or significantly narrow the set of potential non-zero updates. Such coordinate information is particularly sensitive in RecSys, as it directly indicates which items a user has rated. This raises a critical question: \textit{Can we achieve succinct user communication and computation cost in FedRec, while ensuring that the server learns no information about both the rated item indices and user updates except the aggregated model?}

In this paper, we address the problem by proposing a sparsity-aware secure recommender systems with large embedding updates (SecEmb). 
The implementation is available at https://github.com/NusIoraPrivacy/SecEmb. 
Our SecEmb consists of two correlated modules: (1) a privacy-preserving embedding retrieval module that allows users to download relevant embeddings from the server, and (2) an update aggregation module that securely aggregates updates at the server. Our protocol is efficient, secure, and lossless: 
\begin{itemize}[noitemsep,topsep=0pt]
% \begin{itemize}[noitemsep]
    % \setlength\itemsep{-0.2em}
    \item \textbf{Efficient edge device update.} SecEmb achieves succinct download and upload communication, with optimized user computation and memory costs as operations are performed only on rated item embeddings.
    \item \textbf{Privacy-preserving model training.} Both rated item indices and user updates (including non-zero embedding index and their gradient values) remain hidden from the server throughout the training process.
    \item \textbf{Lossless message compression}. Unlike dimension reduction or quantization methods, SecEmb reduces communication costs without compromising accuracy.
\end{itemize}

The contribution of our work can be summarized as follows:

(1) Leveraging the sparsity of embedding updates, we develop a lossless efficient FedRec training protocol that achieves succinct user communication and computation costs, while ensuring the privacy of individual updates.

(2) We further optimize the payload of SecEmb by exploiting the row-wise sparsity of the embedding update matrix as well as the correlation between privacy-preserving embedding retrieval and update aggregation modules.

(3) Empirical studies demonstrate that SecEmb reduces both download and upload communication costs by up to 90x and decreases user-side computation time by up to 70x compared with FedRec utilizing the most efficient SecAgg protocol, with non-negligible utility advantages over lossy message compression schemes.

\section{Related Work}
\label{app:literature}
% In this section, we discuss the related work in cross-user federated recommender systems and secure aggregation for machine learning.
\subsection{Cross-User Federated Recommender System}
In recent years, federated recommender system (FedRec) trained on individual users has gained growing interest in research community. FCF \cite{ammad2019federated} and FedRec \cite{lin2020fedrec} are among the pioneering implementations of federated learning for collaborative filtering based on matrix factorization. Privacy guarantees are enhanced through the application of cryptographic methods to the transmitted gradients \cite{chai2020secure, mai2023privacy}. FedMF \cite{chai2020secure} ensures privacy with homomorphic encryption (HE) techniques, while incurring substantial computation overhead. LightFR \cite{zhang2023lightfr} sacrifices utility for efficiency by binarizing continuous user/item embeddings through learning-to-hash. Difacto \cite{li2016difacto} introduces a distributed factorization machine algorithm that is scalable to a large number of users and items. FedNCF \cite{perifanis2022federated} is a federated realization of neural collaborative iltering (NCF), where secure aggregation is leveraged to protect user gradients. FMSS \cite{lin2022generic} proposes a federated recommendation framework for several recommendation algorithms based on factorization and deep learning. \cite{rabbani2023large} and \cite{xu2022adaptive} improve the training efficiency for edge device using locality-sensitive hashing (LSH) techniques \cite{chen2020mongoose, chen2019slide}. Despite the development of various algorithms for FedRec systems, there is a lack of research on designing sparsity-aware efficient FedRec while simultaneously ensure privacy and utility.

\subsection{Secure Aggregation for Machine Learning}
Secure Aggregation (SecAgg) computes the summation of private gradients without revealing any individual updates.  \cite{bonawitz2017practical} introduces a secure aggregation protocol for FL, leveraging a combination of pairwise masking, Shamir’s Secret Sharing, and symmetric encryption techniques. \cite{bell2020secure} reduces the communication and computation overhead to depend logarithmic in the client size. FastSecAgg \cite{kadhe2020fastsecagg} designs a multi-secret sharing protocol based on Fast Fourier Transform to save computation cost. SAFELearn \cite{fereidooni2021safelearn} designs an secure two-party computation protocol for efficient FL implementation. LightSecAgg \cite{so2022lightsecagg} reduces the computation complexity via one-shot reconstruction of aggregated mask. The two-server additive secret sharing (ASS) protocol \cite{xiong2020efficient} represents the most efficient SecAgg approach in terms of computation and communication complexity. Refer to Table \ref{tab:secagg} for the complexity of existing SecAgg algorithms. However, current SecAgg protocols incur communication costs that scale linearly with model size, and existing attempts on sparse update aggregation \cite{ergun2021sparsified, liu2023efficient, lu2023top} fail to ensure security for individual updates (including both indices and values of non-zero updates).

\section{Background and Preliminaries}
\subsection{Problem Statement}
 % describe the training process of federated recommender system and problem statement
In FedRec, a number of users want to jointly train a recommendation system based on their private data. Denote $\mathcal{U}=\{u_1, u_2, ..., u_n\}$ as the set of users and $\mathcal{I}=\{i_1, i_2, ..., i_m\}$ as the set of available items. Each user $u\in \mathcal{U}$ has a private interaction set $\mathcal{R}_u=\{(i, r_{u, i})|i\in \mathcal{I}_u\} \subset [m] \times \mathbb{R}$, where $\mathcal{I}_u$ denotes the set of items rated by user $u$ and $r_{u, i}$ denotes the rating user $u$ gives to item $i$. Denote $X \in \mathbb{R}^{n\times l_x}$ and $Y \in \mathbb{R}^{m\times l_y}$ as the user and item feature matrix, respectively, capturing user and item attributes such as demographic details, genre, or price. Note that $\mathcal{R}_u$ reflects user-item interactions, excluding these auxiliary features. Our goal is to generate a rating prediction that minimizes the squared deviation between actual and estimated ratings.

We focus on a class of RecSys that models low-dimensional latent factors for user and items \cite{koren2009matrix, xue2017deep, rendle2010factorization}. The recommender fits a model $f$ comprising of $d$-dimensional latent factors (or embeddings) for user $P \in \mathbb{R}^{n\times d}$ and item $Q \in \mathbb{R}^{m\times d}$, along with the remaining parameters $\theta$. Denote $p_u\in \mathbb{R}^{d}$ and $q_i \in \mathbb{R}^{d}$ as the latent factors (or embeddings) for user $u$ and item $i$. A general form of the rating prediction can be expressed as:
\begin{equation}
    \hat{r}_{u,i} = f(x_u, y_i; p_u, q_i, \theta), 
\end{equation}
where $x_u\in \mathbb{R}^{l_x}$ and $y_i\in \mathbb{R}^{l_y}$ denote the feature vector for user $u$ and item $i$, and $\hat{r}_{u,i}$ is the estimated prediction for user $u$ on item $i$.

% \ding{delete regularizer.}

Denote $l(\cdot)$ as a general loss function. The model is trained by minimizing:
\begin{equation}
    \mathcal{L} = \sum_{u,i} l(r_{u, i}, \hat{r}_{u,i})
\end{equation}

The remaining parameters $\theta$ typically include but are not limited to: (1) Feature extractors that convert user and item feature vectors into fixed size representations, denoted as $F_x: \mathbb{R}^{l_x} \rightarrow \mathbb{R}^{l_x \times d}$ and $F_y: \mathbb{R}^{l_y} \rightarrow \mathbb{R}^{l_y \times d}$, respectively. (2) The feed-forward layers within a deep neural network model.

% \ding{Mention that use secure aggregation to aggregate the gradients.}

% \ding{In each training round, users locally update ...}

In each training round, users locally update their private parameters $\Theta_p$ and upload their updates of public parameters $\mathbf{g}_{\Theta_s}$ to the server. To safeguard the privacy of individual gradients, the server employs SecAgg to aggregate the gradients from all active clients and update the public model $\Theta_s$.

\subsection{Function Secret Sharing}

Our protocol builds on function secret sharing (FSS) \cite{boyle2015function, boyle2016function} to optimize the communication payload. FSS secret shares a function $f: \{0, 1\}^n \rightarrow \mathbb{G}$, for some abelian group $\mathbb{G}$, into two functions $f_1$, $f_2$ such that: (1) $f(x)=\sum_{i=1}^2 f_i(x)$ for any $x$, and (2) each description of $f_i$ hides $f$.

\begin{definition} [Function Secret Sharing]
\label{def:fss}
A function secret sharing (FSS) scheme with respect to a function class $\mathcal{F}$ is a pair of efficient algorithms $(\text{FSS.Gen}, \text{FSS.Eval})$:
\begin{itemize}[noitemsep,topsep=1.2pt]
    \item $\text{FSS.Gen}(1^{\lambda}, f)$: Based on the security parameter $1^{\lambda}$ and function description $f$, the key generation algorithm outputs a pair of keys, $(k_1, k_2)$.
    \item $\text{FSS.Eval}(k_i, x)$: Based on key $k_i$ and input $x\in \{0, 1\}^n$, the evaluation algorithm outputs party $i$'s share of $f(x)$, denoted as $f_i(x)$. $f_1(x)$ and $f_2(x)$ form additive shares of $f(x)$.
\end{itemize}
FSS scheme should satisfy the following informal properties (defined formally in Appendix \ref{app:fsssecurity}):
\begin{itemize}[noitemsep,topsep=1.2pt]
    \item \textbf{Correctness:} Given keys $(k_1, k_2)$ of a function $f\in \mathcal{F}$, it holds that $\text{FSS.Eval}(k_1, x)+\text{FSS.Eval}(k_2, x)=f(x)$ for any $x$.
    \item \textbf{Security:} Given keys $(k_1, k_2)$ of a function $f\in \mathcal{F}$, a computationally-bounded adversary that learns either $k_1$ or $k_2$ gains no information about the function $f$, except that $f\in \mathcal{F}$.
\end{itemize}
\end{definition}

A naive form of FSS scheme is to additively
secret share each entry in the truth-table of $f$. However, this approach results in each share containing $2^n$ elements. To obtain polynomial share size, nontrivial scheme of FSS has been developed for simple function classses, e.g., point functions \cite{boyle2015function, boyle2016function}. Our approach utilizes the advanced FSS scheme for the point function. In the following, we provide the formal definition of point function.
\begin{definition} [Point Function]
\label{def:pointfunc}
For $\alpha\in \{0, 1\}^n$ and $\beta \in \mathbb{G}$, the point function $f_{\alpha, \beta}: \{0, 1\}^n \rightarrow \mathbb{G}$ is defined as $f_{\alpha, \beta}(\alpha) = \beta$ and $f_{\alpha, \beta}(x) = 0$ for $x \neq \alpha$.
\end{definition}

\section{Methodology}

Observing that the public gradients primarily consist of highly sparse item embedding updates, we propose SecEmb, a secure FedRec protocol optimized to reduce the costs associated with item embeddings. Figure \ref{fig:overview} illustrates the design of SecEmb, which is comprised of two modules: (1) a privacy-preserving embedding retrieval module and (2) an update aggregation module.
\begin{figure}[h!]
    \centering
    \includegraphics[width=0.95\linewidth]{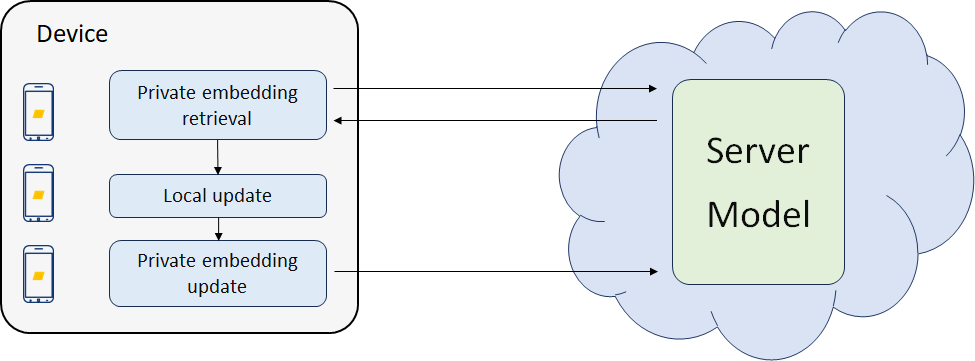}
    \caption{Overview of SecEmb, which is comprised of two secure and communication-efficient modules: (1) A secure embedding retrieval module and (2) A secure embedding aggregation module.}
    \label{fig:overview}
    \vspace{-2mm}
\end{figure}
% Algorithm \ref{alg:FedTrain} outlines the training process for SecEmb. 
\subsection{Key observation}

In most practical recommendation scenarios, the number of items a user has previously interacted with is typically much smaller than the total number of available items (see Figure \ref{fig:long-tail}). This observation is linked to the information overload phenomenon, which recommender systems aim to address. Consequently, the gradient of the item embedding is zero for all items except those with which the user has previously interacted.

Denote $\mathbf{g}_Q$ as the gradient of item embedding $Q$, which is a sparse matrix. In a typical FedRec with general-purpose SecAgg, the communication cost to download embedding $Q$ and upload the sparse matrix $\mathbf{g}_Q$ is at least $O(bmd)$, where $b$ is number of bits required to represent a single numerical value. The computation cost is at least $O(md)$. It is important to note that this corresponds to the bottleneck, since the embedding layer dominates the total model size as the item size increases (see Figure \ref{fig:pctparam}).

Our goal is to optimize the payload for embedding layer, as this can significantly reduce the overall overhead, particularly under huge value of item size $m$.

\begin{figure*}[h!]
    \centering
    \subfigure[Distribution of \# rated items]{\includegraphics[width=0.45\linewidth]{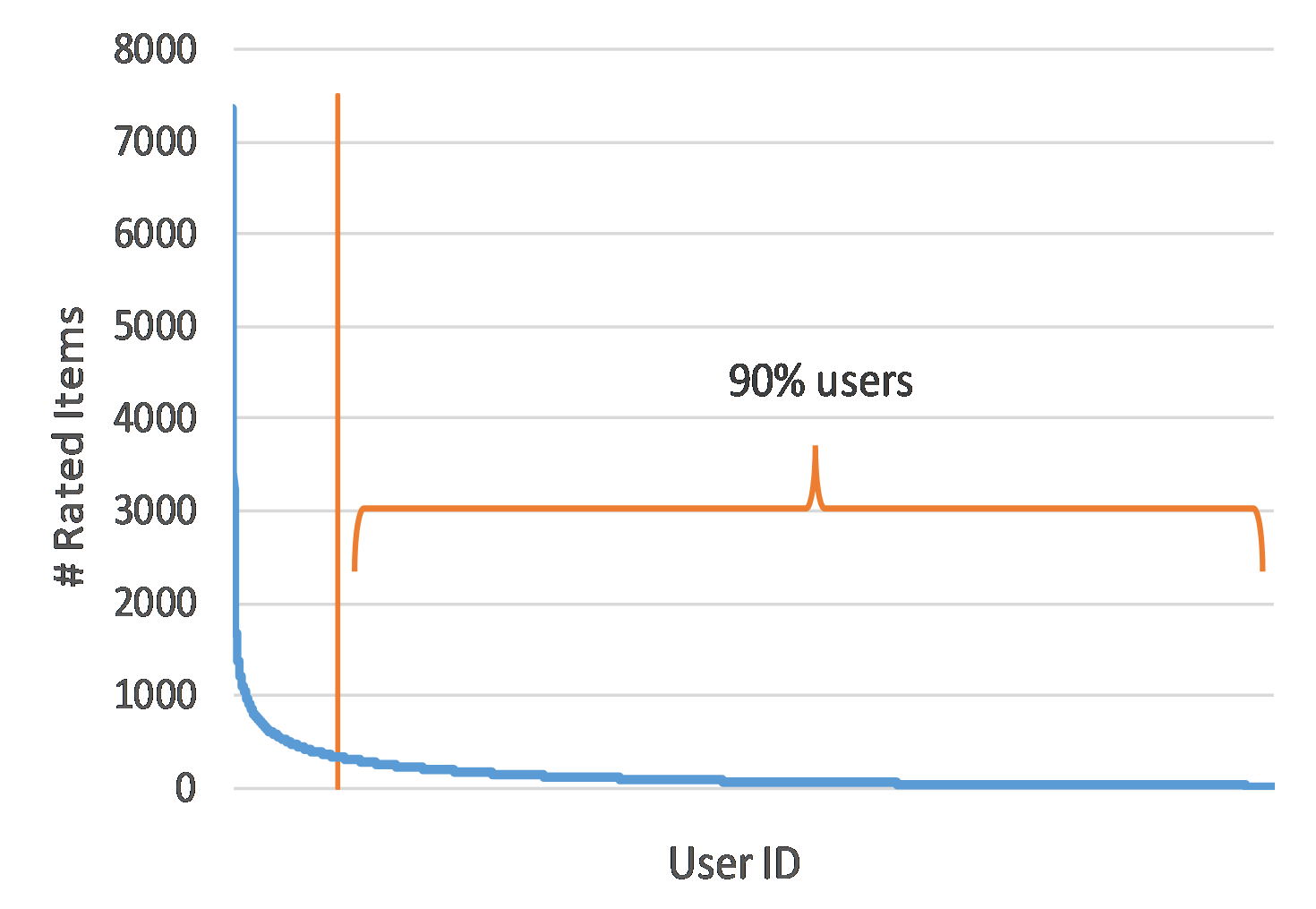}
    \label{fig:long-tail}
    }
    \subfigure[Proportion of parameters]{
    \includegraphics[width=0.45\linewidth]{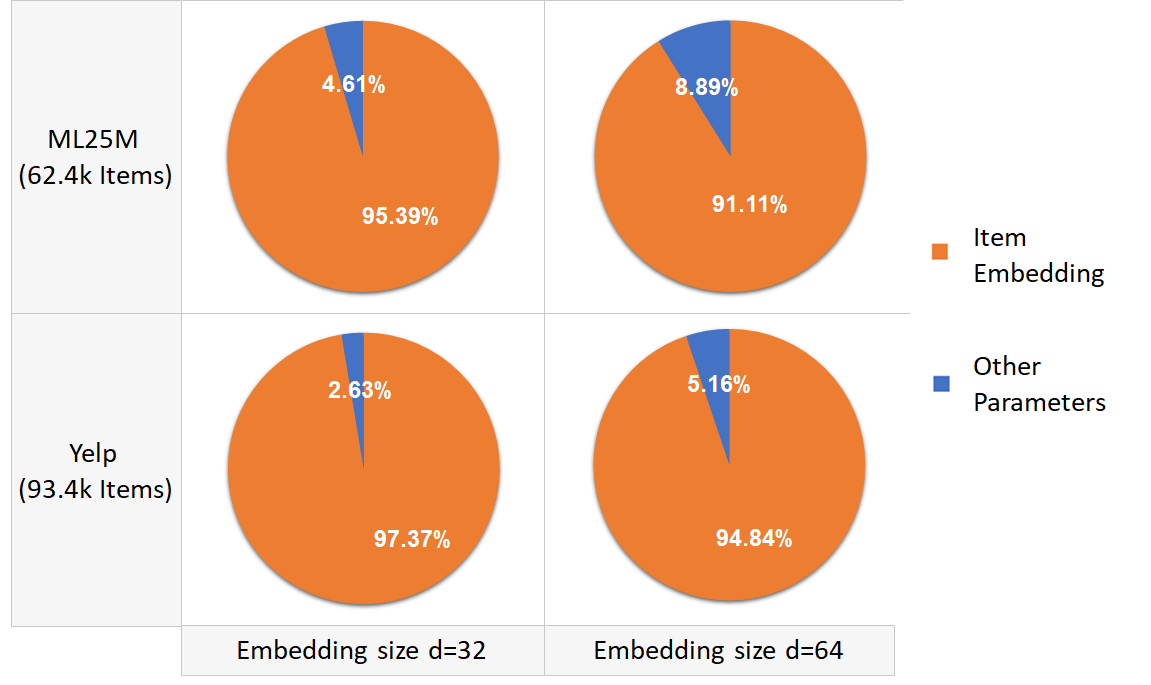}
    \label{fig:pctparam}
    }
    \caption{(a) The long-tailed distribution of the number of rated items in ML10M. The x-axis is the user id sorted by their activeness, and the y-axis represents the number of rated items for the user. For ML10M dataset with 27,278 items, nearly 90\% users have rated only up to 300 items. (b) The proportion of item embeddings and other parameters within a three-layer deep factorization machine (DeepFM) for ML25M and Yelp, under two different embedding sizes $d=32$ and $d=64$.}
    \label{fig:obssparse}
    \vspace{-2mm}
\end{figure*}

\subsection{Initial Construction of EmbSec}

\subsubsection{Privacy-preserving Embedding Retrieval}
In FedRec, only the embeddings for interacted items are relevant for model updates. Therefore, the user can retrieve and store only these item embeddings during training instead of the entire embedding matrix. Accordingly, we design a privacy-preserving embedding retrieval protocol that allows users to download targeted embeddings from the server without exposing the corresponding item indices.

Our key idea is to encode the rated item $\mathbf{g}_Q\in \mathbb{R}^{m\times d}$ into some point functions. Then we can construct a 2-server private retrieval scheme based on the function secret sharing (FSS) of these point functions. Suppose the user rates $m_u'$ items. The private retrieval process can be performed using the following steps:

\textbf{Step 1: Encode rated item with a point function, $\mathbf{g}_{i^u} \rightarrow f_{u, i}$.} User $u$ begins by encoding each rated item $i\in [m_u']$ with a point function $f_{u, i}: \mathcal{I} \rightarrow \mathbb{G}$, for some abelian group $\mathbb{G}$. The function $f_{u, i}$ takes an item id $x\in \mathcal{I}$ as input and outputs $f_{u, i}(x)=1 \in \mathbb{R}$ if $x = i$, and $0$ elsewhere. 

\textbf{Step 2: Generate keys for the point function.} User $u$ secret shares each function $f_{u, i}$ with FSS scheme and outputs a pair of keys, i.e., $(reK_{u,i}^0, reK_{u,i}^1)=\text{FSS.Gen}(1^{\lambda}, f_{u, i})$. The keys $reK_{u,i}^0$ and $reK_{u,i}^1$ are sent to server $0$ and $1$, respectively.

\textbf{Step 3: Compute secret shares of item embedding.} On receiving $reK_u^s$, each server $s\in \{0,1\}$ computes their secret shares of the target item embeddings as follows: 
\begin{equation}
\label{eq:retserver}
     \mathbf{v}_{u, i}^s = \sum_{j} \text{FSS.Eval}(reK_{u, i}^s, j) \cdot Q_j\ \  \forall i \in [m_u'].
\end{equation}

\textbf{Step 4: Reconstruct embedding of target item.} On receiving $\mathbf{v}_{u, i}^s$ from two servers, user $u$ recovers the embeddings of target items by:
\begin{equation}
\label{eq:retrecover}
    Q_{\text{idx}(i)} = \mathbf{v}_{u, i}^0 + \mathbf{v}_{u, i}^1\ \  \forall i \in [m_u'],
\end{equation}
where $\text{idx}(i)$ denotes the global index of the $i$-th rated item.

\subsubsection{Secure Aggregation on Embedding Update}
\label{sec:sparseagg}

The update aggregation module also leverages FSS to reduce communication costs. Exploiting the sparsity of the embedding update matrix $\mathbf{g}_Q\in \mathbb{R}^{m\times d}$, each non-zero gradient value is encoded into some point functions. We begin with the case where user $u$ rates a single item, i.e., $m'_u=1$.

Suppose user $u$'s item embedding update lies in a sparse gradient matrix $\mathbf{g}_{Q^u}\in \mathbb{R}^{m\times d}$. Let $i$ denote the item index for the non-zero update. The SecAgg can be performed using the following steps:

\textbf{Step 1: Encode non-zero elements with $d$ point functions, $\mathbf{g}_{Q_{ik}^u} \rightarrow f_{u, i, k}$ for $k\in [d]$.} User $u$ begins by encoding each element $\mathbf{g}_{Q_{ik}^u}\in \mathbb{R}$ with a point function $f_{u, i, k}: \mathcal{I} \rightarrow \mathbb{G}$. The function $f_{u, i, k}$ takes an item id $x\in \mathcal{I}$ as input and outputs $f_{u, i, k}(x)=\mathbf{g}_{Q_{ik}^u} \in \mathbb{R}$ if $x = i$, and $0 \in \mathbb{R}$ elsewhere. 

\textbf{Step 2: Generate keys for the point function.} User $u$ performs FSS on $f_{u, i, k}$ and outputs $d$ pairs of keys, i.e., $(agK_{u,k}^0, agK_{u,k}^1)=\text{FSS.Gen}(1^{\lambda}, f_{u, i, k})$, which are then sent to their corresponding servers.

\textbf{Step 3: Aggregate secret shares from users.} On receiving $agK_{u,k}^s$ from all participating users, each server $s\in \{0,1\}$ computes their secret shares of the aggregated matrix as follows: 
\begin{equation}
    \mathbf{v}_{Q_{jk}}^s = \sum_{u} \text{FSS.Eval}(agK_{u,k}^s, j),\ \  \forall j \in \mathcal{I},\  k \in [d].
\end{equation}

\textbf{Step 4: Reconstruct gradient aggregation.} The two servers can collaborate to reconstruct the plaintext aggregation matrix. To be specific, server $1$ sends the aggregated secret shares to server $0$, and the plaintext aggregation can be recover by:
\begin{equation}
    \mathbf{g}_{Q} = \mathbf{v}_{Q}^0 + \mathbf{v}_{Q}^1.
\end{equation}

Below we extends the method to cases where $m_u'>1$.

\textbf{SecAgg for $m_u'>1$:} In step 1, user $u$ generates $m_u'd$ point functions $f_{u, i, k}: \mathcal{I} \rightarrow \mathbb{G}$ for $i\in [m_u']$. Let $\text{idx}(x)$ denote the global index of the $i$-th rated item. Accordingly, $f_{u, i, k}$ takes a item id $x\in \mathcal{I}$ as input and outputs $f_{u, i, k}(x)=\mathbf{g}_{Q_{\text{idx}(i), k}^u} \in \mathbb{R}$ if $x = \text{idx}(i)$, and $0 \in \mathbb{R}$ elsewhere. In step 2, user $u$ produces $m_u'd$ pairs of secret keys $(agK_{u,i, k}^0, agK_{u,i, k}^1)$ for $i\in [m_u'], k\in [d]$. In step 3, each server $s\in \{0,1\}$ computes their secret shares of the aggregated matrix as follows: 
\begin{equation}
\label{eq:fsseval}
    \mathbf{v}_{Q_{jk}}^s = \sum_{u} \sum_{i\in [m_u']}\text{FSS.Eval}(agK_{u,i,k}^s, j),\forall j \in \mathcal{I}, k \in [d].
\end{equation}

\subsubsection{Analysis of Initial Construction}
We briefly analyze the complexity and security for the above construction as follows.

\textbf{Commmunication cost:} During privacy-preserving embedding retrieval, only $m_u'$ keys and embeddings are exchanged between user $u$ and server, rather than the entire embedding matrix. For update aggregation, user $u$ uploads only $m_u'd$ keys to each server instead of the whole sparse matrix. Therefore, the communication cost is approximately $O(m_u'd\cdot |\text{Key}|)$ for upload transmission and $O(m_u'd)$ for download transmission, where $|\text{Key}|$ denotes the key size.

\textbf{Computation cost:} The user computation cost primarily stems from the generation of FSS keys. In total, user $u$ generates $m_u'(d+1)$ keys, resulting in a computation overhead of $O(m_u'd\cdot \text{FGen})$, where $\text{FGen}$ represents the cost of generating a single key.

\textbf{Security:} FSS security ensures that user updates and interactions remain hidden from the server. In privacy-preserving embedding retrieval, each server is ignorant of the targeted item indices. During update aggregation, servers learn no information about the rated item index $i$ and its gradient $\mathbf{g}_{Q_i^u}$. To further hide the number of rated items $m_u'$ from servers, we can pre-specify a unified update size $m'$, and accordingly pad or truncate the target indices as well as updated vectors to contain $m'$ items (see Appendix \ref{app:pad}), thus keeping the entire sparse update matrix and target item set hidden from the server.

\subsection{Optimization of SecEmb}
\label{sec:opt}
We identify a crucial property of the FSS key—for party $b\in \{0, 1\}$, the key, denoted as $s_b^0||t_b^0||CW^1||\cdots||CW^{n+1}$ (where $n$ is the bit length of the input), consists of two components: (1) $s_b^0||t_b^0||CW^1||\cdots||CW^{n}$, the seed and correction words used to determine whether the input index corresponds to the non-zero element, and (2) $CW^{n+1}$, the correction word used to convert the final seed into group elements in the abelian group $\mathbb{G}$ (see Figure \ref{fig:fss}). The former part can be identical for point functions with the same non-zero index, presenting an opportunity to eliminate redundancy in FSS keys. Leveraging this insight, we propose two optimizations of SecEmb to further reduce user payload. Algorithm \ref{alg:secemb} outlines the improved SecEmb.

\begin{figure}[htp]
    \centering
    \includegraphics[width=0.95\linewidth]{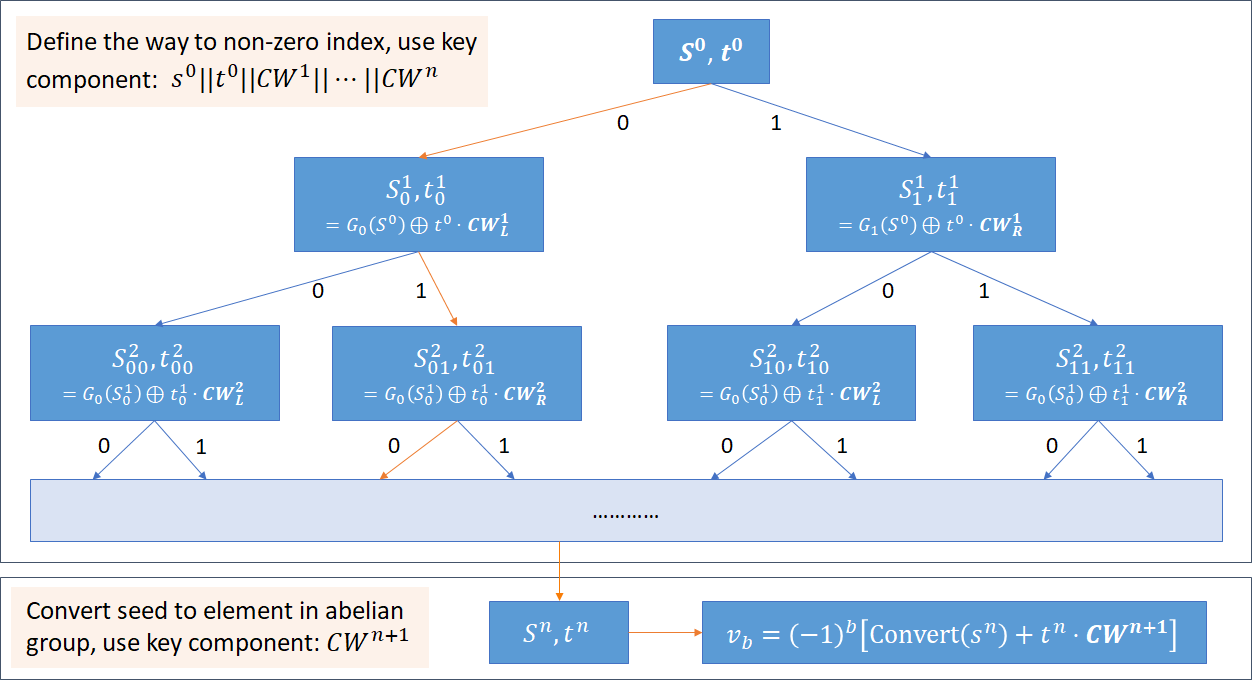}
    \caption{Evaluation of FSS scheme for party $b\in \{0, 1\}$. For simplification, we use $s_b^0$ ($t_b^0$) and $s^0$ ($t^0$) interchangeably.}
    \label{fig:fss}
    \vspace{-2mm}
\end{figure}

\subsubsection{Efficient Row-wise Encoding}
The gradient update of item embeddings forms a \textit{row-wise sparse matrix}, with a few rows containing non-zero values. The initial method generate FSS keys separately for each element, introducing redundancy for updates from the same item. A more efficient approach is to encode each non-zero row as a point function, requiring only a single key pair per embedding update (see Figure \ref{fig:rowwise}).

\begin{figure}[htp]
    \centering
    \includegraphics[width=0.95\linewidth]{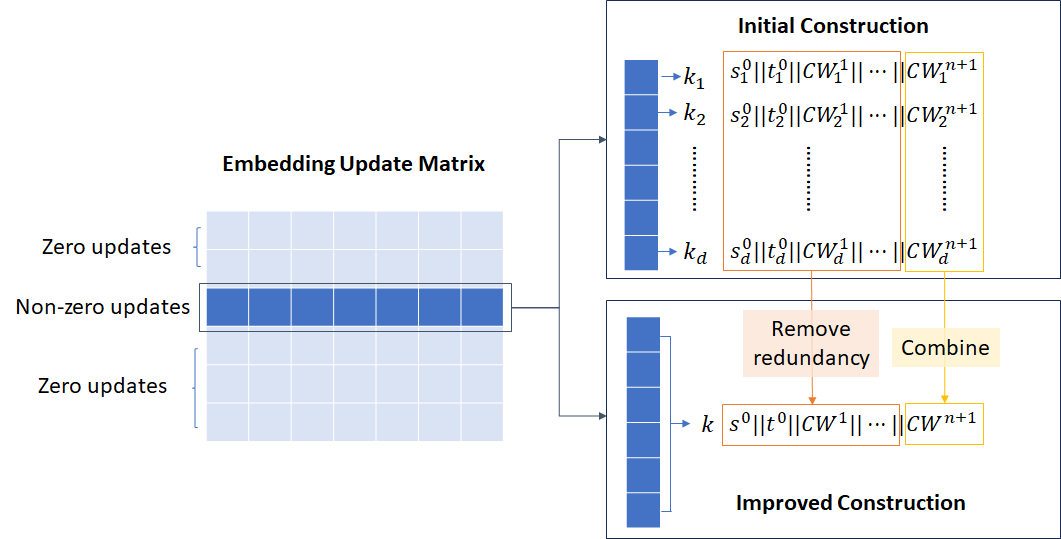}
    \caption{Row-wise encoding of item embedding gradient.}
    \label{fig:rowwise}
    \vspace{-2mm}
\end{figure}

Specifically, user $u$ encodes the embedding gradients for targeted items into $m'$ point functions $f_{u, i}: \mathcal{I} \rightarrow \mathbb{G}$ for $i\in [m']$. The function $f_{u, i}$ takes an item id $x\in \mathcal{I}$ as input and outputs $f_{u, i}(x)=\mathbf{g}_{Q_i^u} \in \mathbb{R}^d$ if $x = \text{idx}(i)$, and $\mathbf{0} \in \mathbb{R}^d$ elsewhere. Then $m'$ pairs of FSS keys are generated correspondingly, i.e., $(agK_{u,i}^0, agK_{u,i}^1)=\text{FSS.Gen}(1^{\lambda}, f_{u, i})$ for $i\in [m']$. Each server $s\in \{0,1\}$ computes their secret shares of the aggregated matrix as follows: 
\begin{equation}
    \mathbf{v}_{Q_j}^s = \sum_{u} \sum_{i\in [m']}\text{FSS.Eval}(agK_{u,i}^s, j),\ \  \forall j \in \mathcal{I}.
\end{equation}

After the optimization, each user generates and uploads only $m'$ keys instead of $m'd$ keys for update aggregation.

\subsubsection{Sharing Path from Retrieval Stage}
For each user, \textit{the indices of relevant embeddings remain the same across both stages.} If user $u$ generates an embedding retrieval key $reK_{u, i}^s$ and an update aggregation key $agK_{u, i}^s$ for item $i$, the former components of both keys can be identical. 

Consequently, in the update aggregation stage, each user can generate and transmit $CW^{n+1}$ instead of the whole FSS key, leading to much lower computation and communication cost. Furthermore, server $s\in \{0, 1\}$ can reuse the binary tree path identified in private embedding retrieval stage to compute $\text{FSS.Eval}(agK_{u, i}^s, j)$ for $u\in \mathcal{U}, j\in \mathcal{I}, i\in[m']$.

The optimization reduces the key size in the update aggregation stage from $n+2$ seeds and corrections words to a single correction word. Additionally, each user eliminates $n$ AES operations for pseudorandom generation when secret sharing the embedding update.

\subsection{Complexity and Security Analysis}
\subsubsection{Complexity Analysis}
Denote $\lambda$ as the security parameter of FSS scheme, and $b$ as the number of bits required to represent a single numerical value. The variables $d$ and $\theta$ refer to the embedding size and the parameters other than item embeddings. The keys sizes for the private embedding retrieval and secure aggregation stages are $(\lambda+2)\log m$ and $bd$, respectively \cite{boyle2016function}. The computation cost to generate an FSS key in private embedding retrieval stage is $O(\log m \cdot \text{AES})$, and the cost to derive partial key in upload aggregation is negligible compared with the AES operations. Considering $m'$ functions and $|\theta|$ dense updates, we have upload communication complexity of $O\left(m' \left(bd+\lambda \log m\right)+|\theta|b\right)$, download complexity of $O\left(m' bd+|\theta|b\right)$, and computation cost of $O\left(m' \log m \cdot \text{AES} + |\theta|\right)$.

Table \ref{tab:theocost} compares the user-side cost between SecEmb and secure FedRec \cite{xiong2020efficient}. Secure FedRec utilizes full model download and adopts the most efficient SecAgg for update aggregation (see Table \ref{tab:secagg}). The communication cost of SecEmb scales linearly with $m'$ and logarithmically with $m$. SecEmb outperforms secure FedRec in upload communication cost as long as $m'<mbd/\left((\lambda +2)\log m + b(d+1)\right)$, and in download cost as long as $m'<m/2$. In Appendix \ref{app:breakpoint} we demonstrate that these inequalities usually hold for RecSys with sparse update. The computation complexity of SecEmb primarily arises from AES operations, which can be mitigated when $m'$ is sufficiently small compared with $m$.

\begin{table}[!htbp]
\caption{User computation and communication cost of SecEmb and secure FedRec. CommunDown and CommunUp refer to download and upload communication cost, respectively.}
\label{tab:theocost}
\centering
\scalebox{0.9}{
\begin{small}
\begin{tabular}{lcc} % 使用 tabularx 环境
\toprule
 & SecEmb & Secure FedRec  \\ 
 \midrule
CommunDown & $O\left(m'bd+|\theta|b)\right)$ & $O\left(mbd+|\theta|b\right)$\\
CommunUp & $O\left(m'(\lambda \log m+bd)+|\theta|b\right)$ & $O\left(mbd+|\theta|b\right)$\\
Computation & $O\left(m' \log m \cdot \text{AES} + |\theta|\right)$ & $O\left(md+ |\theta|\right)$\\
\bottomrule
\end{tabular}
\end{small}}
\vspace{-2mm}
\end{table}

\subsubsection{Security Analysis}
The FSS security property ensures that the two non-colluding servers learn no more information than just the aggregated gradients. The FSS keys hide the rated item index as well as the values of updated gradients from each server. Under a pre-determined $m'$, servers are ignorant about the number of rated item for each user. Consequently, no information about the individual updates is revealed to the servers except the aggregation. Below is the formal security formulation for the update aggregation stage; a comprehensive security analysis is provided in Appendix \ref{app:secproof}.

For $\mathcal{C}\subset \mathcal{U}\cup \{b\}$ ($b\in \{0,1\}$), let $\text{Real}_{\lambda, agg}^\mathcal{C}$ be the joint view of colluding parties $\mathcal{C}$ in experiments executing secure update aggregation in SecEmb. Denote $\{(\mathbf{g}_{Q^u}, \theta_u)\}_{u\in \mathcal{U}}$ as the set of gradients from all users. Then we have the security of the aggregation protocol as follows.
\begin{theorem}[Security of secure update aggregation]
\label{thm:secupdate}
There exists a PPT simulator $\text{Sim}_{\lambda, agg}^\mathcal{C}$, such that for all user input $\{(\mathbf{g}_{Q^u}, \theta_u)\}_{u\in \mathcal{U}}$ and $\mathcal{C}\subset \mathcal{U}\cup \{b\}$ ($b\in \{0,1\}$), the output of $\text{Sim}_{\lambda, agg}^\mathcal{C}$ and $\text{Real}_{\lambda, agg}^\mathcal{C}$ are computationally indistinguishable:
\begin{equation}
\begin{gathered}
    \text{Real}_{\lambda, agg}^\mathcal{C} (1^\lambda, \{(\mathbf{g}_{Q^u}, \theta_u)\}_{u\in \mathcal{U}\backslash \mathcal{C}}) \\
    = \text{Sim}_{\lambda, agg}^\mathcal{C} (1^\lambda, (m', \mathbb{G}_1, ..., \mathbb{G}_{m'}, \mathbf{g}_{\theta}, \mathbf{g}_{Q})),
\end{gathered}
\end{equation}
where $\mathbf{g}_{Q}$ and $\mathbf{g}_{\theta}$ denote the aggregated gradients for item embedding and the remaining parameters, respectively.
\end{theorem}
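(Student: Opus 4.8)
The plan is to give a standard simulation-based argument in the semi-honest model, reducing to the two properties of the FSS scheme in Definition~\ref{def:fss}: correctness supplies a reconstruction identity, and security (simulatability of a single key) drives indistinguishability. First I would enumerate the joint view $\text{Real}_{\lambda,agg}^\mathcal{C}$: for each corrupted user $u\in\mathcal{C}$, its local randomness and the key pairs $(agK_{u,i}^0,agK_{u,i}^1)$ it generated; for the corrupted server $b$, the received shares $\{agK_{u,i}^b\}$ from every user together with the reconstruction message it obtains (the other server's aggregated share $\mathbf{v}_Q^{1-b}$) and the analogous dense-parameter shares for $\mathbf{g}_\theta$. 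The only part of this view that depends on the \emph{honest} users' gradients is the honest users' $b$-shares $\{agK_{u,i}^b\}_{u\notin\mathcal{C}}$ and the reconstruction message.

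Next I would construct $\text{Sim}_{\lambda,agg}^\mathcal{C}$. For corrupted users it runs the honest key generation on their adversarially chosen, hence known, inputs. For each honest user $u\notin\mathcal{C}$ and each row $i\in[m']$ it invokes the FSS key-simulator --- which exists by FSS security and needs only the function class, encoded by the public output group $\mathbb{G}_i$ and the domain size $m$ --- to produce a fake share $\widetilde{agK}_{u,i}^b$ independent of the true gradient. It then computes the corrupted server's aggregate share $\mathbf{v}_Q^b=\sum_u\sum_i \text{FSS.Eval}(\cdot,j)$ from the keys it holds and programs the reconstruction message to be $\mathbf{g}_Q-\mathbf{v}_Q^b$ (and $\mathbf{g}_\theta$ minus its own dense share), so that reconstruction yields exactly the supplied aggregate. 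The simulator's output thus depends on honest inputs only through simulated keys.

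Indistinguishability would then follow from a hybrid argument. I would first rewrite the reconstruction message in the real experiment as $\mathbf{g}_Q-\mathbf{v}_Q^b$: this is an \emph{identity}, not a distributional change, because FSS correctness gives $\text{Eval}(agK_{u,i}^0,j)+\text{Eval}(agK_{u,i}^1,j)=f_{u,i}(j)$ for every pair, whence $\mathbf{v}_Q^0+\mathbf{v}_Q^1=\mathbf{g}_Q$ in every execution. After this rewriting the view depends on honest gradients \emph{only} through the honest $b$-shares, so I can walk through the $\sum_{u\notin\mathcal{C}} m'$ honest keys one at a time, replacing each real $agK_{u,i}^b$ by a simulated one. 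Consecutive hybrids are computationally indistinguishable by a direct reduction: a distinguisher is handed an FSS challenge key (real or simulated) for one honest user's point function, embeds it as that user's $b$-share, generates everything else as in the current hybrid, and thereby breaks FSS security. The dense-parameter shares are handled by the standard additive-secret-sharing step, where the honest share is uniform conditioned on summing to $\mathbf{g}_\theta$, a perfect transition. Composing all transitions carries $\text{Real}_{\lambda,agg}^\mathcal{C}$ to the output of $\text{Sim}_{\lambda,agg}^\mathcal{C}$.

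The main obstacle I anticipate is the reduction to FSS security rather than the bookkeeping. FSS security guarantees only that a \emph{single} key hides its function, so each hybrid step must expose exactly one honest party's key --- this is why the coalition may contain at most one server, and why the reconstruction message had to be pinned to $\mathbf{g}_Q-\mathbf{v}_Q^b$ before any key was swapped, since otherwise the other server's real share would leak a second, correlated evaluation of the same function. A secondary subtlety, inherited from the sharing-path optimization of Section~\ref{sec:opt}, is that the aggregation key reuses the seed and correction-word prefix of the retrieval key; to stay sound the FSS simulator must produce this shared prefix consistently across the two stages, which is possible precisely because a single party's key is simulatable as a whole. Finally, the unified size $m'$ and the groups $\mathbb{G}_1,\dots,\mathbb{G}_{m'}$ being public (via padding) ensures none of the simulated quantities leak the true number of rated items.
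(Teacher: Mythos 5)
Your overall skeleton --- enumerating the corrupted joint view, pinning the reconstruction message to $\mathbf{g}_Q-\mathbf{v}_Q^b$ via FSS correctness, hybridizing over honest users' uploaded shares, and closing the dense part with additive secret sharing --- matches the paper's strategy, and it would be a complete proof for the \emph{initial} construction of Section~\ref{sec:sparseagg}, where each uploaded aggregation share is a standalone FSS key. The genuine gap is in your central indistinguishability step as applied to the protocol the theorem actually covers, namely the optimized SecEmb of Section~\ref{sec:opt} and Algorithm~\ref{alg:secemb}. There, a user's aggregation message for item $i$ is \emph{not} a full FSS key but only the final correction word $CW_{u,i}=CW^{(n+1)}$, whose seed-and-path prefix is shared with the retrieval key $reK_{u,i}^b$ that server $b$ already holds (and needs, in order to evaluate). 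Your reduction --- ``hand the distinguisher an FSS challenge key and embed it as that user's $b$-share'' --- cannot be executed in this setting: given only a challenge key for one point function, the reduction has no way to manufacture the correlated second component (the retrieval key sharing the same seeds, or, in the other direction, the extra correction word for the gradient value), since doing so requires both parties' final seeds $s_0^{(n)},s_1^{(n)}$ and the secret point, none of which the reduction knows. FSS security (Definitions~\ref{def:fss} and~\ref{def:fsssecurity}) simulates \emph{one key of one function}; it says nothing about the joint distribution of a key for $f_{\alpha,1}$ together with an additional output correction word for $f_{\alpha,\mathbf{g}_{Q_i^u}}$ hanging off the same path. Your closing remark that consistency across the two stages ``is possible precisely because a single party's key is simulatable as a whole'' is an assertion, not an argument --- it is exactly the point that needs proof.

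The paper closes this gap by opening up the underlying construction rather than invoking FSS security as a black box: between $\text{Hyb}_j$ and $\text{Hyb}_{j+1}$ it inserts an intermediate distribution $\text{Hyb}_{j\rightarrow j+1}$, generated by Algorithm~\ref{alg:hybcw}, which samples the path correction words uniformly (consistently with the already-simulated retrieval keys) and then computes $CW^{(n+1)}$ honestly from the seeds obtained by walking that random path; the two flanking transitions are then reduced to pseudorandomness of the generator $G$ and of the $\text{Convert}$ map, respectively (Claims 3.7 and 3.8 of \cite{boyle2016function}). A possible repair of your black-box route would be to observe that the pair (retrieval key, aggregation correction word) is syntactically a single FSS key for the point function mapping $\alpha$ to the value $(1,\mathbf{g}_{Q_i^u})$ in a product group, but that identification is itself a property of the concrete construction --- it holds only if the two $\text{Convert}$ maps are jointly pseudorandom when applied to the shared final seed --- so some construction-level argument of the kind the paper gives is unavoidable.
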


It is important to note that our algorithm can be integrated with differential privacy (DP) to achieve stronger privacy protection. In particular, each server can independently add calibrated noises to the aggregated secret shares matrix, so that recovered aggregation matrix adheres to $(\epsilon, \delta)$-DP \cite{dwork2006differential, cormode2018privacy} (see Appendix \ref{app:dp}).

\begin{table}[!htbp]
\caption{Upload communication cost (in MB) per user for SecEmb and Secure FedRec (SecFedRec) in one iteration. Reduction ratio is computed as the communication overhead of SecFedRec by that of SecEmb.}
\label{tab:upcommucost}
\centering
\begin{small}
\scalebox{0.88}{
\begin{tabular}{lccccc} % 使用 tabularx 环境
\toprule
& ML100K  & ML1M & ML10M  &ML25M & Yelp\\ 
Item Size $m$ & 1,682  & 3,883 & 10,681 & 62,423 & 93,386\\ 
 \midrule
 & \multicolumn{5}{c}{MF}\\
 \cmidrule(l){2-6} 
SecFedRec & 0.86 & 1.99 & 5.47 & 31.96 & 47.81 \\
SecEmb & 0.17 & 0.27 & 0.28 & 0.51 & 0.52 \\
Red. Ratio & 4.99 & 7.35 & 19.22 & 62.07 & 91.22\\
 \midrule
 & \multicolumn{5}{c}{NCF}\\
 \cmidrule(l){2-6} 
SecFedRec & 0.44 & 1.00 & 4.11 & 23.98 & 29.89 \\
SecEmb & 0.13 & 0.20 & 0.26 & 0.46 & 0.44 \\
Red. Ratio & 3.44 & 5.02 & 15.93 & 51.80 & 68.46 \\
 \midrule
  & \multicolumn{5}{c}{FM}\\
 \cmidrule(l){2-6} 
SecFedRec & 0.91 & 2.01 & 5.48 & 31.97 & 47.82 \\
SecEmb & 0.18 & 0.28 & 0.29 & 0.53 & 0.53 \\
Red. Ratio & 5.05 & 7.22 & 18.76 & 60.87 & 91.05 \\
 \midrule
  &  \multicolumn{5}{c}{DeepFM}\\
 \cmidrule(l){2-6} 
SecFedRec & 14.95 & 8.84 & 8.63 & 35.13 & 50.45 \\
SecEmb & 14.22 & 7.10 & 3.45 & 3.68 & 3.16 \\
Red. Ratio & 1.05 & 1.24 & 2.50 & 9.54 & 15.98 \\
\bottomrule
\end{tabular}
}
\end{small}
\vspace{-2mm} % Adjust vertical space as needed
\end{table}

\section{Experiment Evaluation}
\subsection{Experiment Setting}
\label{sec:expsetting}
We evaluate our SecEmb on five public datasets: MovieLens 100K (ML100K), MovieLens 1M (ML1M), MovieLens 10M (ML10M), MovieLens 25M (ML25M), and Yelp \cite{harper2015movielens, yelpdata}. For the Yelp dataset, we sample a portion of top users ranked in descending order by their number of rated items, and obtain a subset containing 10,000 users and 93,386 items. Table \ref{tab:datastat} summarizes the statistics for the datasets.

\begin{figure*}[!htbp]
\begin{center}
  \centering    \centerline{\includegraphics[width=0.95\linewidth]{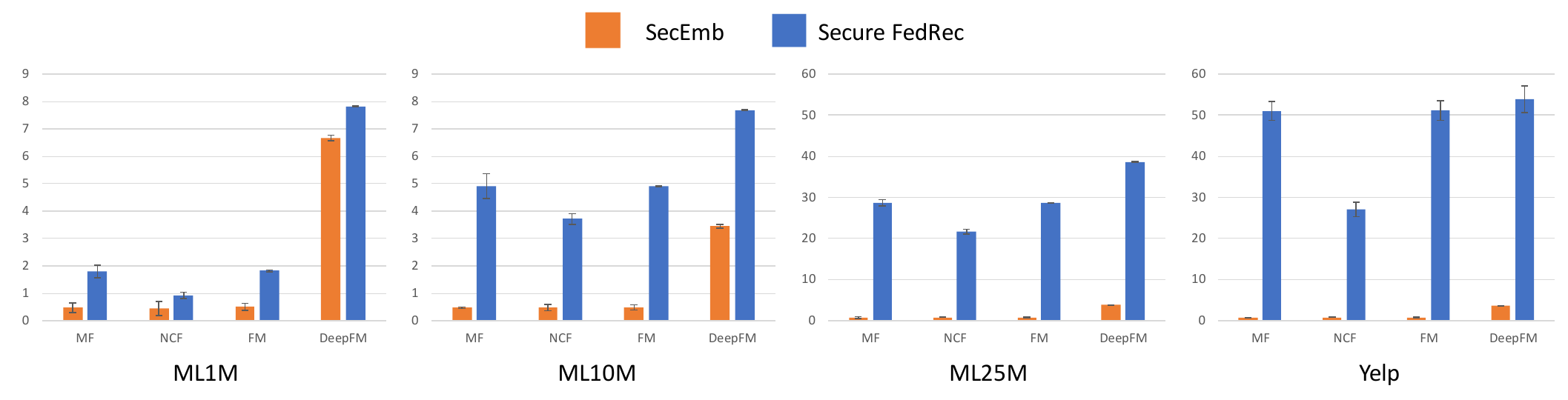}}
\caption{User computation time (in milliseconds) for secret shares generation during training phase.}
\label{fig:fsstime}
\end{center}
\vspace{-2mm} % Adjust vertical space as needed
\end{figure*}

Our framework is tested with four latent factor-based recommender models: matrix factorization with biased term (MF) \cite{koren2009matrix}, neural collaborative filtering (NCF) \cite{he2017neural}, factorization machine (FM) \cite{rendle2010factorization}, and deep factorization machine (DeepFM) \cite{guo2017deepfm}. Detailed hyperparameters for each model are provided in Appendix \ref{app:hyper}. We provide a coarse-grained comparison between SecEmb and existing FL protocols in Appendix \ref{app:flcomp}.

\subsection{Efficiency Analysis}
\subsubsection{Communication Cost}
\label{sec:comthe}
To evaluate the communication efficiency of our framework, we conduct a comparative analysis of the communication payload between SecEmb and secure FedRec as presented in Table \ref{tab:upcommucost} and \ref{tab:downcommucost}. We use the two-server ASS, which has the minimal communication overhead, to compute the upload cost for secure FedRec (see Table \ref{tab:secagg}). A key finding is that SecEmb's communication overhead increases at a significantly slower rate with item size compared to secure FedRec, particularly for models characterized by a higher proportion of sparse updates.

For upload communication, our protocol reduces costs by approximately 4x to 90x for MF and FM, which have minimal dense updates, depending on the dataset item size. For NCF that includes a small share of dense updates, SecEmb achieves overhead reductions ranging from roughly 3.5x to 70x. For DeepFM, the reduction is less pronounced for the ML100K and ML1M datasets with item sizes lower than 4k, while the cost savings become more substantial as the item size exceeds 10k. A similar pattern is observed for download communication (see Appendix \ref{app:downcost}).

We also compare our SecEmb with existing sparse aggregation protocols in Appendix \ref{app:sparsesecagg}.

\subsubsection{Computation Cost}
\label{app:compcost}
Figure \ref{fig:fsstime} compares the user computation time to generate the secret shares for SecEmb and secure FedRec with the most efficient SecAgg protocol. SecEmb has an advantage over secure FedRec in terms of the computation overhead, since each user generates fewer shares when $m'\ll m$. Furthermore, the computation time for SecEmb scales more slowly with an increase in item size compared to secure FedRec. This results in a significantly higher reduction ratio for datasets encompassing a greater number of items, achieving nearly 70x for Yelp withs MF and FM.

\subsection{Utility Analysis}

We evaluate the utility of SecEmb against several lossy message compression methods for communication efficiency, including: (1) Singular value decomposition (SVD) \cite{nguyen2024towards}, (2) Correlated Low-rank Structure (CoLR) \cite{nguyen2024towards}, (3) 8-bit quantization (Bit8Quant) \cite{dettmers20158}, and (4) Ternary Quantization (TernQuant) \cite{wen2017terngrad}. The first two methods represent dimension reduction approaches, and the latter two employ gradient quantization method. For fair comparison, the lossy message compressions are applied only to item embeddings.

Table \ref{tab:utility} presents the prediction accuracy and reduction ratio of communication cost. We focus on the upload communication as most baselines do not optimize download costs (see Table \ref{tab:compfl}). It can be observed that: (1) While SecEmb achieves similar reduction ratios on datasets with smaller item sizes, its communication benefits significantly surpass those of other compression methods on ML25M and Yelp, where the item size exceeds 60k.
(2) The performance is degraded on an average by 1.29\%, 1.74\%, 1.95\%, and 2.04\% for Bit8Quant, TernQuant, SVD, and CoLR, respectively, suggesting that SecEmb offers non-negligible advantages over the lossy message compression mechanisms.

\begin{table*}[h!]
\caption{RMSE and Reduction Ratio (R.R.) for SecEmb and various message compression methods. The values for RMSE denote the mean $\pm$ standard deviation of four rounds of experiments. R.R. refers to the ratio of upload communication cost before and after the application of the compression mechanism.}
\label{tab:utility}
\centering
\begin{small}
\scalebox{0.87}{
\begin{tabular}{llcccccccccc} % 使用 tabularx 环境
\toprule
 & & \multicolumn{2}{c}{ML100K} & \multicolumn{2}{c}{ML1M} &\multicolumn{2}{c}{ML10M}   & \multicolumn{2}{c}{ML25M}& \multicolumn{2}{c}{Yelp} \\ 
 \midrule
 & & RMSE & R.R. & RMSE & R.R. & RMSE & R.R. &RMSE & R.R. & RMSE & R.R.\\ 
 \midrule
\multirow{5}{*}{MF} & Bit8quant & $0.948$\tiny $\pm 0.004$ \normalsize & $4.00$ & $0.914$\tiny $\pm 0.000$ \normalsize& $4.00$  & $0.872$\tiny $\pm 0.001$ \normalsize& $4.00$  & $0.870$\tiny $\pm 0.000$ \normalsize& $4.00$  & $1.050$\tiny $\pm 0.001$ \normalsize & $4.00$  \\
& Ternquant & $0.951$\tiny $\pm 0.002$ \normalsize &$8.0$0 & $0.916$\tiny $\pm 0.002$ \normalsize &$8.00$ & $0.873$\tiny $\pm 0.001$ \normalsize &$8.00$ & $0.874$\tiny $\pm 0.000$ \normalsize &$8.00$ & $1.050$\tiny $\pm 0.001$ \normalsize &$8.00$ \\
& SVD & $0.952$\tiny $\pm 0.006$ \normalsize & $5.22$ & $0.917$\tiny $\pm 0.001$ \normalsize & $6.39$ & $0.872$\tiny $\pm 0.001$ \normalsize & $16.15$ &$0.873$\tiny $\pm 0.000$ \normalsize & $16.23$ & $1.050$\tiny $\pm 0.001$ \normalsize & $16.24$\\
& CoLR & $0.951$\tiny $\pm 0.001$ \normalsize & $5.42$ & $0.916$\tiny $\pm 0.001$ \normalsize& $6.50$ & $0.872$\tiny $\pm 0.001$ \normalsize & $16.25$ & $0.873$\tiny $\pm 0.000$ \normalsize & $16.25$ & $\mathbf{1.049}$\tiny $\mathbf{\pm 0.002}$  \normalsize & $16.25$\\
& \textbf{SecEmb} & $\mathbf{0.944}$\tiny $\mathbf{\pm 0.003}$ \normalsize & $4.99$ & $\mathbf{0.903}$\tiny $\mathbf{\pm 0.002}$ \normalsize &$7.35$ & $\mathbf{0.868}$\tiny $\mathbf{\pm 0.003}$  \normalsize & $19.22$ &$\mathbf{0.864}$\tiny $\mathbf{\pm 0.002}$   \normalsize &$62.07$ & $1.050$\tiny $\pm 0.001$ \normalsize& $91.22$\\
 \midrule
\multirow{5}{*}{NCF} & Bit8quant & $0.950$\tiny $\pm 0.011$  \normalsize & $3.86$ &$0.898$\tiny $\pm 0.002$ & $3.94$ & $0.832$\tiny $\pm 0.004$  \normalsize & $3.97$ & $0.820$\tiny $\pm 0.001$ & $3.99$ & $1.035$\tiny $\pm 0.002$ & $4.00$\\
& Ternquant & $0.951$\tiny $\pm 0.005$ & $7.37$ &$0.901$\tiny $\pm 0.007$ & $7.71$ &  $0.833$\tiny $\pm 0.001$ & $7.84$ & $0.825$\tiny $\pm 0.002$& $7.97$ &$1.036$\tiny $\pm 0.004$ & $7.98$\\
& SVD & $0.952$\tiny $\pm 0.006$ & $3.15$ &$0.903$\tiny $\pm 0.007$ & $4.02$ & $0.838$\tiny $\pm 0.002$  & $11.81$ & $0.824$\tiny $\pm 0.000$ & $12.17$ & $1.035$\tiny $\pm 0.002$ & $10.22$ \\
& CoLR & $\mathbf{0.947}$\tiny $\mathbf{\pm 0.005}$ & $3.21$ & $0.912$\tiny $\pm 0.005$ & $4.06$ & $0.856$\tiny $\pm 0.000$& $11.87$ & $0.839$\tiny $\pm 0.001$ & $12.18$ &$\mathbf{1.032}$\tiny $\mathbf{\pm 0.001}$ & $10.22$\\
& \textbf{SecEmb} &  $0.949$\tiny $\pm 0.014$ \normalsize& $3.44$ & $\mathbf{0.897}$\tiny $\mathbf{\pm 0.006}$ \normalsize& $5.02$ & $\mathbf{0.819}$\tiny $\mathbf{\pm 0.003}$ \normalsize& $15.93$ & $\mathbf{0.786}$\tiny $\mathbf{\pm 0.008}$ \normalsize& $51.80$ & 1.035\tiny $\pm$ 0.001 \normalsize& $68.46$ \\
 \midrule
\multirow{5}{*}{FM} & Bit8quant & $0.945$\tiny $\pm 0.001$ & $3.41$ & $0.912$\tiny $\pm 0.000$ & $3.86$ & $\mathbf{0.845}$\tiny $\mathbf{\pm 0.000}$& $3.98$ &$0.836$\tiny $\pm 0.001$& $4.00$ &$1.009$\tiny $\pm 0.001$ & $4.00$ \\
& Ternquant & $0.948$\tiny $\mathbf{\pm 0.002}$ & $5.70$ & $0.913$\tiny $\pm 0.001$ & $7.37$ & $0.856$\tiny $\pm 0.000$ & $7.90$ & $0.850$\tiny $\pm 0.000$ & $7.98$ &$1.010$\tiny $\pm 0.002$ & $7.99$ \\
& SVD & $0.946$\tiny $\pm 0.001$ & $4.21$ & $0.913$\tiny $\pm 0.001$ &  $6.00$ & $0.870$\tiny $\pm 0.003$& $15.71$ & $0.856$\tiny $\pm 0.002$ & $16.16$ & $1.009$\tiny $\pm 0.001$ & $16.20$\\
& CoLR & $0.944$\tiny $\pm 0.004$ & $4.33$ & $0.915$\tiny $\pm 0.001$ & $6.10$ & $0.867$\tiny $\pm 0.000$ & $15.81$ & $0.851$\tiny $\pm 0.001$ & $16.17$ & $1.008$\tiny $\pm 0.001$ & $16.21$\\
&\textbf{SecEmb} & $\mathbf{0.937}$\tiny $\mathbf{\pm 0.004}$ \normalsize & $5.05$ & $\mathbf{0.906}$ \tiny $\mathbf{\pm 0.000}$ \normalsize & $7.22$ & $0.848$\tiny $\pm 0.002$ \normalsize & $18.76$& $\mathbf{0.789}$\tiny $\mathbf{\pm 0.003}$ \normalsize & $60.87$& $\mathbf{1.008}$\tiny $\mathbf{\pm 0.003}$\normalsize & $91.05$ \\
 \midrule
\multirow{5}{*}{DeepFM} & Bit8quant & $0.947$\tiny $\pm 0.007$ & $1.05$ & $0.912$\tiny $\pm 0.003$ & $1.21$ & $0.840$\tiny $\pm 0.004$ & $1.92$ & $0.832$\tiny $\pm 0.003$ & $3.16$ & $1.012$\tiny $\pm 0.002$ & $3.47$ \\
& Ternquant & $0.949$\tiny $\pm 0.004$ & $1.06$ & $0.914$\tiny $\pm 0.002$ & $1.25$ & $0.847$\tiny $\pm 0.000$ & $2.26$ & $0.842$\tiny $\pm 0.001$ & $4.94$ & $1.018$\tiny $\pm 0.001$ & $5.88$\\
& SVD & $0.954$\tiny $\pm 0.005$ & $1.05$ & $0.913$\tiny $\pm 0.001$ & $1.24$ & $0.853$\tiny $\pm 0.002$ &$2.48$ & $0.847$\tiny $\pm 0.000$ & $6.90$ & $1.014$\tiny $\pm 0.001$ & $9.09$ \\
& CoLR & $0.952$\tiny $\pm 0.001$ & $1.05$ & $0.905$\tiny $\pm 0.001$ & $1.24$ & $0.856$\tiny $\pm 0.001$ & $2.49$ & $0.841$\tiny $\pm 0.001$ & $6.90$ & $1.017$\tiny $\pm 0.002$ & $9.10$\\
&\textbf{SecEmb} &  $\mathbf{0.939}$\tiny $\mathbf{\pm 0.006}$ \normalsize & $1.05$ & $\mathbf{0.902}$\tiny $\mathbf{\pm 0.001}$ \normalsize &$1.24$ & $\mathbf{0.821}$\tiny $\mathbf{\pm 0.001}$ \normalsize & $2.50$& $\mathbf{0.791}$\tiny $\mathbf{\pm 0.001}$ \normalsize & $9.54$ & $\mathbf{1.011}$\tiny $\mathbf{\pm 0.002}$ \normalsize & $15.98$ \\
\bottomrule
\end{tabular}
}
\end{small}
\vspace{-2mm} % Adjust vertical space as needed
\end{table*}

\subsection{Application to Sequential Recommendation}
We extend our framework to sequential recommendation tasks, which predict the next item a user will interact with based on their historical interactions. 
Specifically, we apply SecEmb to the item embedding layers of sequential recommendation models. 
We evaluate our approach using two sequence models, Caser \cite{tang2018personalized} and SASRec \cite{kang2018self}, on the ML1M and Amazon datasets. Experimental settings are detailed in Appendix \ref{app:hyperseq}.

Our findings in Table \ref{tab:seq} indicate that compared with SecEmb, applying existing message compression techniques results in average reductions of 0.5\%, 2.7\%, 2.6\%, and 2.6\% for Bit8Quant, TernQuant, SVD, and CoLR, respectively. Notably, for the Amazon dataset, which has a vast item set and exhibits high sparsity (density$<0.002$‰),  our SecEmb method achieves up to a 2500× reduction in upload communication cost.

\begin{table*}[h!]
\caption{RMSE and Reduction Ratio (R.R.) for SecEmb and various message compression methods on Sequential Recommender System. R.R. refers to the ratio of upload communication cost before and after the application of the compression mechanism.}
\label{tab:seq}
\centering
\begin{small}
\scalebox{0.98}{
\begin{tabular}{llcccccc} % 使用 tabularx 环境
\toprule
 & & \multicolumn{3}{c}{ML1M (3,883 items)} & \multicolumn{3}{c}{Amazon (9,267,503 items)} \\ 
 \midrule
 & & HR@10 & NDCG@10 & R.R. & HR@10 & NDCG@10 & R.R.\\ 
 \midrule
\multirow{5}{*}{Caser} & Bit8quant & 0.473 & 0.267 & 3.51 & 0.628 & 0.481 & 4 \\
& Ternquant & 0.468 & 0.261 & 6.03 & 0.623 & 0.478 & 8 \\
& SVD & 0.471 & 0.266 & 9.71 & 0.627 & 0.483 & 38 \\
& CoLR & 0.465 & 0.263 & \textbf{9.78} & 0.625 & 0.480 & 38\\
& \textbf{SecEmb} & \textbf{0.475} & \textbf{0.270} & 4.79 & \textbf{0.629} & \textbf{0.483} & \textbf{2549} \\
 \midrule
\multirow{5}{*}{SASRec} & Bit8quant & 0.468 & 0.266& 2.55 & 0.636 & 0.485 & 4 \\
& Ternquant & 0.443 & 0.248 & 3.44 & 0.635 & 0.483 & 8 \\
& SVD & 0.444 & 0.249 & 3.00 & 0.621 & 0.472 & 12\\
& CoLR & 0.464 & 0.262 & 3.02 & 0.600 & 0.463 & 12 \\
& \textbf{SecEmb} & \textbf{0.472} & \textbf{0.268} & \textbf{3.86} & \textbf{0.639} & \textbf{0.485} & \textbf{2342} \\
\bottomrule
\end{tabular}
}
\end{small}
\vspace{-2mm} % Adjust vertical space as needed
\end{table*}

\subsection{Ablation Studies}
To investigate the effectiveness of our optimizations, we compare SecEmb with two variants: (1) initial construction of SecEmb (SecEmb-Init), and (2) initial construction of SecEmb optimized by efficient row-wise encoding (SecEmb-RowEnc). Table \ref{tab:ablation} presents the user cost with MF model, and for full results refer to Section \ref{app:ablation}. The initial construction of SecEmb suffers substantially higher communication overhead than the improved one, and its cost can be higher than that for secure FedRec on dataset with item size lower than 11k. Additionally, sharing the binary path between the two modules reduces the communication cost by around 36\%. Similarly, the two optimizations substantially improve user computation costs.

\begin{table}[h!]
\caption{Upload communication cost and computation cost for secret generation per user for SecEmb and its variants with MF.}
\label{tab:ablation}
\centering
\begin{small}
\scalebox{0.83}{
\begin{tabular}{lccccc} % 使用 tabularx 环境
\toprule
 &ML100K  & ML1M & ML10M  &ML25M & Yelp\\ 
 % & (1.7k Items)  & (3.9k Items) & (10.7k Items) &(62.4k Items)& (93.4k Items)\\ 
 \midrule
 & \multicolumn{5}{c}{Upload communication cost (in MB)}\\
 \cmidrule(l){2-6} 
 SecEmb-Init & 4.56 & 7.60 & 8.51 & 16.83 & 17.43 \\
 SecEmb-RowEnc & 0.28 & 0.43 & 0.45 & 0.78 & 0.79 \\
 SecEmb & 0.17 & 0.27 & 0.28 & 0.51 & 0.52\\
 \midrule
 & \multicolumn{5}{c}{Computation cost (in milliseconds)}\\
  \cmidrule(l){2-6} 
 SecEmb-Init & 5.34 & 15.35 & 22.56 & 28.53 & 26.43 \\
 SecEmb-RowEnc & 0.37 & 0.90 & 0.91 & 1.34 & 1.47 \\
 SecEmb & 0.31 & 0.47 & 0.47 & 0.72 & 0.74 \\
\bottomrule
\end{tabular}
}
\end{small}
\vspace{-2mm} % Adjust vertical space as needed
\end{table}

\section{Conclusion}
This paper proposes SecEmb, a lossless privacy-preserving recommender system designed for resource-constrained devices with optimized payload on sparse embedding updates. SecEmb achieves succinct communication costs, i.e., cost independent of item size $m$ for downloads and logarithmic in $m$ for uploads, while preserving the secrecy of individual gradients. Additionally, it reduces device-side memory and computation overhead by processing only relevant item embeddings. The empirical evaluation demonstrates that: (1) SecEmb reduces communication costs by up to 90x and decreases user computation time by up to 70x compared with secure FedRec. (2) SecEmb offers non-negligible utility advantages compared with lossy message compression methods. Further discussions on our framework are provided in Appendix \ref{app:discuss}.
% In the unusual situation where you want a paper to appear in the
% references without citing it in the main text, use \nocite
% \section*{Acknowledgements}
% Ding was supported in part by a grant from the Israel Science Foundation (ISF Grant No. 1774/20), and by the European Union (ERC, SCALE,101162665). Views and opinions expressed are however those of the author(s) only and do not necessarily reflect those of the European Union or the European Research Council. Neither the European Union nor the granting authority can be held responsible for them.
\section*{Acknowledgements}
Ding was supported in part by a grant from the Israel Science Foundation (ISF Grant No. 1774/20), and by the European Union (ERC, SCALE,101162665). Views and opinions expressed are
however those of the author(s) only and do not necessarily reflect those of the
European Union or the European Research Council. Neither the European Union
nor the granting authority can be held responsible for them.

\section*{Impact Statement}
This paper presents work aimed at advancing the field of privacy-preserving computation, particularly focusing on improving user privacy in federated recommender system through our SecEmb framework. Our proposed framework contributes to the technical evolution of federated learning and presents substantial benefits in data protection.
We believe that the ethical impacts and social consequences align with established principles in responsible AI development.

\bibliography{example_paper}

\begin{thebibliography}{60}
\providecommand{\natexlab}[1]{#1}
\providecommand{\url}[1]{\texttt{#1}}
\expandafter\ifx\csname urlstyle\endcsname\relax
  \providecommand{\doi}[1]{doi: #1}\else
  \providecommand{\doi}{doi: \begingroup \urlstyle{rm}\Url}\fi

\bibitem[Abadi et~al.(2016)Abadi, Chu, Goodfellow, McMahan, Mironov, Talwar, and Zhang]{abadi2016deep}
Abadi, M., Chu, A., Goodfellow, I., McMahan, H.~B., Mironov, I., Talwar, K., and Zhang, L.
\newblock Deep learning with differential privacy.
\newblock In \emph{Proceedings of the 2016 ACM SIGSAC conference on computer and communications security}, pp.\  308--318, 2016.

\bibitem[Addanki et~al.(2022)Addanki, Garbe, Jaffe, Ostrovsky, and Polychroniadou]{addanki2022prio+}
Addanki, S., Garbe, K., Jaffe, E., Ostrovsky, R., and Polychroniadou, A.
\newblock Prio+: Privacy preserving aggregate statistics via boolean shares.
\newblock In \emph{International Conference on Security and Cryptography for Networks}, pp.\  516--539. Springer, 2022.

\bibitem[Aji \& Heafield(2017)Aji and Heafield]{aji2017sparse}
Aji, A.~F. and Heafield, K.
\newblock Sparse communication for distributed gradient descent.
\newblock In \emph{Proceedings of the 2017 Conference on Empirical Methods in Natural Language Processing}, pp.\  440--445, 2017.

\bibitem[Ammad-Ud-Din et~al.(2019)Ammad-Ud-Din, Ivannikova, Khan, Oyomno, Fu, Tan, and Flanagan]{ammad2019federated}
Ammad-Ud-Din, M., Ivannikova, E., Khan, S.~A., Oyomno, W., Fu, Q., Tan, K.~E., and Flanagan, A.
\newblock Federated collaborative filtering for privacy-preserving personalized recommendation system.
\newblock \emph{arXiv preprint arXiv:1901.09888}, 2019.

\bibitem[Bell et~al.(2020)Bell, Bonawitz, Gasc{\'o}n, Lepoint, and Raykova]{bell2020secure}
Bell, J.~H., Bonawitz, K.~A., Gasc{\'o}n, A., Lepoint, T., and Raykova, M.
\newblock Secure single-server aggregation with (poly) logarithmic overhead.
\newblock In \emph{Proceedings of the 2020 ACM SIGSAC Conference on Computer and Communications Security}, pp.\  1253--1269, 2020.

\bibitem[Bonawitz et~al.(2017)Bonawitz, Ivanov, Kreuter, Marcedone, McMahan, Patel, Ramage, Segal, and Seth]{bonawitz2017practical}
Bonawitz, K., Ivanov, V., Kreuter, B., Marcedone, A., McMahan, H.~B., Patel, S., Ramage, D., Segal, A., and Seth, K.
\newblock Practical secure aggregation for privacy-preserving machine learning.
\newblock In \emph{proceedings of the 2017 ACM SIGSAC Conference on Computer and Communications Security}, pp.\  1175--1191, 2017.

\bibitem[Bonawitz et~al.(2019)Bonawitz, Eichner, Grieskamp, Huba, Ingerman, Ivanov, Kiddon, Kone{\v{c}}n{\`y}, Mazzocchi, McMahan, et~al.]{bonawitz2019towards}
Bonawitz, K., Eichner, H., Grieskamp, W., Huba, D., Ingerman, A., Ivanov, V., Kiddon, C., Kone{\v{c}}n{\`y}, J., Mazzocchi, S., McMahan, B., et~al.
\newblock Towards federated learning at scale: System design.
\newblock \emph{Proceedings of Machine Learning and Systems}, 1:\penalty0 374--388, 2019.

\bibitem[Boneh et~al.(2021)Boneh, Boyle, Corrigan-Gibbs, Gilboa, and Ishai]{boneh2021lightweight}
Boneh, D., Boyle, E., Corrigan-Gibbs, H., Gilboa, N., and Ishai, Y.
\newblock Lightweight techniques for private heavy hitters.
\newblock In \emph{2021 IEEE Symposium on Security and Privacy (SP)}, pp.\  762--776. IEEE, 2021.

\bibitem[Boyle et~al.(2015)Boyle, Gilboa, and Ishai]{boyle2015function}
Boyle, E., Gilboa, N., and Ishai, Y.
\newblock Function secret sharing.
\newblock In \emph{Annual international conference on the theory and applications of cryptographic techniques}, pp.\  337--367. Springer, 2015.

\bibitem[Boyle et~al.(2016)Boyle, Gilboa, and Ishai]{boyle2016function}
Boyle, E., Gilboa, N., and Ishai, Y.
\newblock Function secret sharing: Improvements and extensions.
\newblock In \emph{Proceedings of the 2016 ACM SIGSAC Conference on Computer and Communications Security}, pp.\  1292--1303, 2016.

\bibitem[Boyle et~al.(2022)Boyle, Gilboa, Ishai, and Kolobov]{boyle2022information}
Boyle, E., Gilboa, N., Ishai, Y., and Kolobov, V.~I.
\newblock Information-theoretic distributed point functions.
\newblock In \emph{3rd Conference on Information-Theoretic Cryptography}, 2022.

\bibitem[Chai et~al.(2020)Chai, Wang, Chen, and Yang]{chai2020secure}
Chai, D., Wang, L., Chen, K., and Yang, Q.
\newblock Secure federated matrix factorization.
\newblock \emph{IEEE Intelligent Systems}, 36\penalty0 (5):\penalty0 11--20, 2020.

\bibitem[Chen et~al.(2019)Chen, Medini, Farwell, Gobriel, Tai, and Shrivastava]{chen2019slide}
Chen, B., Medini, T., Farwell, J., Gobriel, S., Tai, C., and Shrivastava, A.
\newblock Slide: In defense of smart algorithms over hardware acceleration for large-scale deep learning systems.
\newblock \emph{arXiv preprint arXiv:1903.03129}, 2019.

\bibitem[Chen et~al.(2020)Chen, Liu, Peng, Xu, Li, Dao, Song, Shrivastava, and Re]{chen2020mongoose}
Chen, B., Liu, Z., Peng, B., Xu, Z., Li, J.~L., Dao, T., Song, Z., Shrivastava, A., and Re, C.
\newblock Mongoose: A learnable lsh framework for efficient neural network training.
\newblock In \emph{International Conference on Learning Representations}, 2020.

\bibitem[Chen et~al.(2024{\natexlab{a}})Chen, Wu, Chen, Liu, He, Xiong, Liu, Guo, and Huang]{chen2024your}
Chen, R., Wu, Y., Chen, L., Liu, G., He, Q., Xiong, T., Liu, C., Guo, J., and Huang, H.
\newblock Your vision-language model itself is a strong filter: Towards high-quality instruction tuning with data selection.
\newblock In \emph{ACL (Findings)}, 2024{\natexlab{a}}.

\bibitem[Chen et~al.(2024{\natexlab{b}})Chen, Wu, Guo, and Huang]{chen2024mark}
Chen, R., Wu, Y., Guo, J., and Huang, H.
\newblock De-mark: Watermark removal in large language models.
\newblock \emph{arXiv preprint arXiv:2410.13808}, 2024{\natexlab{b}}.

\bibitem[Chen et~al.(2025)Chen, Wu, Guo, and Huang]{chen2025improved}
Chen, R., Wu, Y., Guo, J., and Huang, H.
\newblock Improved unbiased watermark for large language models.
\newblock \emph{arXiv preprint arXiv:2502.11268}, 2025.

\bibitem[Cheon et~al.(2017)Cheon, Kim, Kim, and Song]{cheon2017homomorphic}
Cheon, J.~H., Kim, A., Kim, M., and Song, Y.
\newblock Homomorphic encryption for arithmetic of approximate numbers.
\newblock In \emph{Advances in Cryptology--ASIACRYPT 2017: 23rd International Conference on the Theory and Applications of Cryptology and Information Security, Hong Kong, China, December 3-7, 2017, Proceedings, Part I 23}, pp.\  409--437. Springer, 2017.

\bibitem[Cormode et~al.(2018)Cormode, Jha, Kulkarni, Li, Srivastava, and Wang]{cormode2018privacy}
Cormode, G., Jha, S., Kulkarni, T., Li, N., Srivastava, D., and Wang, T.
\newblock Privacy at scale: Local differential privacy in practice.
\newblock In \emph{Proceedings of the 2018 International Conference on Management of Data}, pp.\  1655--1658, 2018.

\bibitem[Corrigan-Gibbs \& Boneh(2017)Corrigan-Gibbs and Boneh]{corrigan2017prio}
Corrigan-Gibbs, H. and Boneh, D.
\newblock Prio: Private, robust, and scalable computation of aggregate statistics.
\newblock In \emph{14th USENIX symposium on networked systems design and implementation (NSDI 17)}, pp.\  259--282, 2017.

\bibitem[Cramer et~al.(2015)Cramer, Damg{\aa}rd, et~al.]{cramer2015secure}
Cramer, R., Damg{\aa}rd, I.~B., et~al.
\newblock \emph{Secure multiparty computation}.
\newblock Cambridge University Press, 2015.

\bibitem[Dettmers(2015)]{dettmers20158}
Dettmers, T.
\newblock 8-bit approximations for parallelism in deep learning.
\newblock \emph{arXiv preprint arXiv:1511.04561}, 2015.

\bibitem[Dubey et~al.(2024)Dubey, Jauhri, Pandey, Kadian, Al-Dahle, Letman, Mathur, Schelten, Yang, Fan, et~al.]{dubey2024llama}
Dubey, A., Jauhri, A., Pandey, A., Kadian, A., Al-Dahle, A., Letman, A., Mathur, A., Schelten, A., Yang, A., Fan, A., et~al.
\newblock The llama 3 herd of models.
\newblock \emph{arXiv preprint arXiv:2407.21783}, 2024.

\bibitem[Dwork(2006)]{dwork2006differential}
Dwork, C.
\newblock Differential privacy.
\newblock In \emph{International colloquium on automata, languages, and programming}, pp.\  1--12. Springer, 2006.

\bibitem[Ergun et~al.(2021)Ergun, Sami, and Guler]{ergun2021sparsified}
Ergun, I., Sami, H.~U., and Guler, B.
\newblock Sparsified secure aggregation for privacy-preserving federated learning.
\newblock \emph{arXiv preprint arXiv:2112.12872}, 2021.

\bibitem[Fereidooni et~al.(2021)Fereidooni, Marchal, Miettinen, Mirhoseini, M{\"o}llering, Nguyen, Rieger, Sadeghi, Schneider, Yalame, et~al.]{fereidooni2021safelearn}
Fereidooni, H., Marchal, S., Miettinen, M., Mirhoseini, A., M{\"o}llering, H., Nguyen, T.~D., Rieger, P., Sadeghi, A.-R., Schneider, T., Yalame, H., et~al.
\newblock Safelearn: Secure aggregation for private federated learning.
\newblock In \emph{2021 IEEE Security and Privacy Workshops (SPW)}, pp.\  56--62. IEEE, 2021.

\bibitem[Guo et~al.(2017)Guo, Tang, Ye, Li, and He]{guo2017deepfm}
Guo, H., Tang, R., Ye, Y., Li, Z., and He, X.
\newblock Deepfm: a factorization-machine based neural network for ctr prediction.
\newblock In \emph{Proceedings of the 26th International Joint Conference on Artificial Intelligence}, pp.\  1725--1731, 2017.

\bibitem[Gupta et~al.(2021)Gupta, Choudhary, Tang, Wei, Wang, Huang, Kejariwal, Ramchandran, and Mahoney]{gupta2021training}
Gupta, V., Choudhary, D., Tang, P., Wei, X., Wang, X., Huang, Y., Kejariwal, A., Ramchandran, K., and Mahoney, M.~W.
\newblock Training recommender systems at scale: Communication-efficient model and data parallelism.
\newblock In \emph{Proceedings of the 27th ACM SIGKDD Conference on Knowledge Discovery \& Data Mining}, pp.\  2928--2936, 2021.

\bibitem[Han et~al.(2024)Han, Gao, Liu, Zhang, and Zhang]{han2024parameter}
Han, Z., Gao, C., Liu, J., Zhang, J., and Zhang, S.~Q.
\newblock Parameter-efficient fine-tuning for large models: A comprehensive survey.
\newblock \emph{arXiv preprint arXiv:2403.14608}, 2024.

\bibitem[Harper \& Konstan(2015)Harper and Konstan]{harper2015movielens}
Harper, F.~M. and Konstan, J.~A.
\newblock The movielens datasets: History and context.
\newblock \emph{Acm transactions on interactive intelligent systems (tiis)}, 5\penalty0 (4):\penalty0 1--19, 2015.

\bibitem[He et~al.(2017)He, Liao, Zhang, Nie, Hu, and Chua]{he2017neural}
He, X., Liao, L., Zhang, H., Nie, L., Hu, X., and Chua, T.-S.
\newblock Neural collaborative filtering.
\newblock In \emph{Proceedings of the 26th international conference on world wide web}, pp.\  173--182, 2017.

\bibitem[Hu et~al.(2022)Hu, Wallis, Allen-Zhu, Li, Wang, Wang, Chen, et~al.]{hulora}
Hu, E.~J., Wallis, P., Allen-Zhu, Z., Li, Y., Wang, S., Wang, L., Chen, W., et~al.
\newblock Lora: Low-rank adaptation of large language models.
\newblock In \emph{International Conference on Learning Representations}, 2022.

\bibitem[Hua et~al.(2015)Hua, Xia, and Zhong]{10.5555/2832415.2832494}
Hua, J., Xia, C., and Zhong, S.
\newblock Differentially private matrix factorization.
\newblock In \emph{Proceedings of the 24th International Conference on Artificial Intelligence}, IJCAI'15, pp.\  1763–1770. AAAI Press, 2015.
\newblock ISBN 9781577357384.

\bibitem[Kadhe et~al.(2020)Kadhe, Rajaraman, Koyluoglu, and Ramchandran]{kadhe2020fastsecagg}
Kadhe, S., Rajaraman, N., Koyluoglu, O.~O., and Ramchandran, K.
\newblock Fastsecagg: Scalable secure aggregation for privacy-preserving federated learning.
\newblock \emph{arXiv preprint arXiv:2009.11248}, 2020.

\bibitem[Kang \& McAuley(2018)Kang and McAuley]{kang2018self}
Kang, W.-C. and McAuley, J.
\newblock Self-attentive sequential recommendation.
\newblock In \emph{2018 IEEE international conference on data mining (ICDM)}, pp.\  197--206. IEEE, 2018.

\bibitem[Kingma(2014)]{kingma2014adam}
Kingma, D.~P.
\newblock Adam: A method for stochastic optimization.
\newblock \emph{arXiv preprint arXiv:1412.6980}, 2014.

\bibitem[Koren et~al.(2009)Koren, Bell, and Volinsky]{koren2009matrix}
Koren, Y., Bell, R., and Volinsky, C.
\newblock Matrix factorization techniques for recommender systems.
\newblock \emph{Computer}, 42\penalty0 (8):\penalty0 30--37, 2009.

\bibitem[Li et~al.(2016)Li, Liu, Smola, and Wang]{li2016difacto}
Li, M., Liu, Z., Smola, A.~J., and Wang, Y.-X.
\newblock Difacto: Distributed factorization machines.
\newblock In \emph{Proceedings of the Ninth ACM International Conference on Web Search and Data Mining}, pp.\  377--386, 2016.

\bibitem[Lin et~al.(2020)Lin, Liang, Pan, and Ming]{lin2020fedrec}
Lin, G., Liang, F., Pan, W., and Ming, Z.
\newblock Fedrec: Federated recommendation with explicit feedback.
\newblock \emph{IEEE Intelligent Systems}, 36\penalty0 (5):\penalty0 21--30, 2020.

\bibitem[Lin et~al.(2022)Lin, Pan, Yang, and Ming]{lin2022generic}
Lin, Z., Pan, W., Yang, Q., and Ming, Z.
\newblock A generic federated recommendation framework via fake marks and secret sharing.
\newblock \emph{ACM Transactions on Information Systems}, 41\penalty0 (2):\penalty0 1--37, 2022.

\bibitem[Liu et~al.(2022)Liu, Hu, Wu, and Smith]{liu2022privacy}
Liu, K., Hu, S., Wu, S.~Z., and Smith, V.
\newblock On privacy and personalization in cross-silo federated learning.
\newblock \emph{Advances in neural information processing systems}, 35:\penalty0 5925--5940, 2022.

\bibitem[Liu et~al.(2023)Liu, Wang, He, Shi, Lin, An, and Li]{liu2023efficient}
Liu, T., Wang, Z., He, H., Shi, W., Lin, L., An, R., and Li, C.
\newblock Efficient and secure federated learning for financial applications.
\newblock \emph{Applied Sciences}, 13\penalty0 (10):\penalty0 5877, 2023.

\bibitem[Lu et~al.(2023)Lu, Li, Liu, Guan, and Yang]{lu2023top}
Lu, S., Li, R., Liu, W., Guan, C., and Yang, X.
\newblock Top-k sparsification with secure aggregation for privacy-preserving federated learning.
\newblock \emph{Computers \& Security}, 124:\penalty0 102993, 2023.

\bibitem[Mai \& Pang(2023)Mai and Pang]{mai2023privacy}
Mai, P. and Pang, Y.
\newblock Privacy-preserving multiview matrix factorization for recommender systems.
\newblock \emph{IEEE Transactions on Artificial Intelligence}, 5\penalty0 (1):\penalty0 267--277, 2023.

\bibitem[McMahan et~al.(2017)McMahan, Moore, Ramage, Hampson, and y~Arcas]{mcmahan2017communication}
McMahan, B., Moore, E., Ramage, D., Hampson, S., and y~Arcas, B.~A.
\newblock Communication-efficient learning of deep networks from decentralized data.
\newblock In \emph{Artificial intelligence and statistics}, pp.\  1273--1282. PMLR, 2017.

\bibitem[Mohassel \& Zhang(2017)Mohassel and Zhang]{mohassel2017secureml}
Mohassel, P. and Zhang, Y.
\newblock Secureml: A system for scalable privacy-preserving machine learning.
\newblock In \emph{2017 IEEE symposium on security and privacy (SP)}, pp.\  19--38. IEEE, 2017.

\bibitem[Nguyen et~al.(2024)Nguyen, Nguyen, Nguyen, Hoang, Le, and Wong]{nguyen2024towards}
Nguyen, N.-H., Nguyen, T.-A., Nguyen, T., Hoang, V.~T., Le, D.~D., and Wong, K.-S.
\newblock Towards efficient communication and secure federated recommendation system via low-rank training.
\newblock In \emph{Proceedings of the ACM on Web Conference 2024}, pp.\  3940--3951, 2024.

\bibitem[Perifanis \& Efraimidis(2022)Perifanis and Efraimidis]{perifanis2022federated}
Perifanis, V. and Efraimidis, P.~S.
\newblock Federated neural collaborative filtering.
\newblock \emph{Knowledge-Based Systems}, 242:\penalty0 108441, 2022.

\bibitem[Rabbani et~al.(2023)Rabbani, Bornstein, and Huang]{rabbani2023large}
Rabbani, T., Bornstein, M., and Huang, F.
\newblock Large-scale distributed learning via private on-device locality-sensitive hashing.
\newblock In \emph{Proceedings of the 37th International Conference on Neural Information Processing Systems}, pp.\  16153--16171, 2023.

\bibitem[Rendle(2010)]{rendle2010factorization}
Rendle, S.
\newblock Factorization machines.
\newblock In \emph{2010 IEEE International conference on data mining}, pp.\  995--1000. IEEE, 2010.

\bibitem[Sanh(2019)]{sanh2019distilbert}
Sanh, V.
\newblock Distilbert, a distilled version of bert: smaller, faster, cheaper and lighter.
\newblock In \emph{Proceedings of Thirty-third Conference on Neural Information Processing Systems (NIPS2019)}, 2019.

\bibitem[So et~al.(2022)So, He, Yang, Li, Yu, E~Ali, Guler, and Avestimehr]{so2022lightsecagg}
So, J., He, C., Yang, C.-S., Li, S., Yu, Q., E~Ali, R., Guler, B., and Avestimehr, S.
\newblock Lightsecagg: a lightweight and versatile design for secure aggregation in federated learning.
\newblock \emph{Proceedings of Machine Learning and Systems}, 4:\penalty0 694--720, 2022.

\bibitem[Tang \& Wang(2018)Tang and Wang]{tang2018personalized}
Tang, J. and Wang, K.
\newblock Personalized top-n sequential recommendation via convolutional sequence embedding.
\newblock In \emph{Proceedings of the eleventh ACM international conference on web search and data mining}, pp.\  565--573, 2018.

\bibitem[Wei et~al.(2020)Wei, Li, Ding, Ma, Yang, Farokhi, Jin, Quek, and Poor]{wei2020federated}
Wei, K., Li, J., Ding, M., Ma, C., Yang, H.~H., Farokhi, F., Jin, S., Quek, T.~Q., and Poor, H.~V.
\newblock Federated learning with differential privacy: Algorithms and performance analysis.
\newblock \emph{IEEE transactions on information forensics and security}, 15:\penalty0 3454--3469, 2020.

\bibitem[Wen et~al.(2017)Wen, Xu, Yan, Wu, Wang, Chen, and Li]{wen2017terngrad}
Wen, W., Xu, C., Yan, F., Wu, C., Wang, Y., Chen, Y., and Li, H.
\newblock Terngrad: Ternary gradients to reduce communication in distributed deep learning.
\newblock \emph{Advances in neural information processing systems}, 30, 2017.

\bibitem[Xiong et~al.(2020)Xiong, Zhou, Xia, Gu, and Weng]{xiong2020efficient}
Xiong, L., Zhou, W., Xia, Z., Gu, Q., and Weng, J.
\newblock Efficient privacy-preserving computation based on additive secret sharing.
\newblock \emph{arXiv preprint arXiv:2009.05356}, 2020.

\bibitem[Xu et~al.(2022)Xu, Liu, Xu, and Shrivastava]{xu2022adaptive}
Xu, Z., Liu, L., Xu, Z., and Shrivastava, A.
\newblock Adaptive sparse federated learning in large output spaces via hashing.
\newblock In \emph{Workshop on Federated Learning: Recent Advances and New Challenges (in Conjunction with NeurIPS 2022)}, 2022.

\bibitem[Xue et~al.(2017)Xue, Dai, Zhang, Huang, and Chen]{xue2017deep}
Xue, H.-J., Dai, X., Zhang, J., Huang, S., and Chen, J.
\newblock Deep matrix factorization models for recommender systems.
\newblock In \emph{IJCAI}, volume~17, pp.\  3203--3209. Melbourne, Australia, 2017.

\bibitem[Yelp(2015)]{yelpdata}
Yelp.
\newblock Yelp dataset.
\newblock 2015.
\newblock URL \url{https://www.yelp.com/dataset}.

\bibitem[Zhang et~al.(2023)Zhang, Luo, Wu, He, and Li]{zhang2023lightfr}
Zhang, H., Luo, F., Wu, J., He, X., and Li, Y.
\newblock Lightfr: Lightweight federated recommendation with privacy-preserving matrix factorization.
\newblock \emph{ACM Transactions on Information Systems}, 41\penalty0 (4):\penalty0 1--28, 2023.

\end{thebibliography}
\bibliographystyle{icml2025}

%%%%%%%%%%%%%%%%%%%%%%%%%%%%%%%%%%%%%%%%%%%%%%%%%%%%%%%%%%%%%%%%%%%%%%%%%%%%%%%
%%%%%%%%%%%%%%%%%%%%%%%%%%%%%%%%%%%%%%%%%%%%%%%%%%%%%%%%%%%%%%%%%%%%%%%%%%%%%%%
% APPENDIX
%%%%%%%%%%%%%%%%%%%%%%%%%%%%%%%%%%%%%%%%%%%%%%%%%%%%%%%%%%%%%%%%%%%%%%%%%%%%%%%
%%%%%%%%%%%%%%%%%%%%%%%%%%%%%%%%%%%%%%%%%%%%%%%%%%%%%%%%%%%%%%%%%%%%%%%%%%%%%%%
\newpage
\appendix
\onecolumn
\section{Complexity of Existing SecAgg Algorithms}
\label{app:secagg}
In Table \ref{tab:secagg}, we summarize the computation and communication complexity of existing SecAgg algorithms. It can be observed that a) two-server ASS represents the most efficient algorithm in terms of both computation and communication complexity, and b) the per-client communication cost depends linear in the model size $l$ for all protocols.
\begin{table*}[!htbp]
\caption{Computation and communication complexity of existing SecAgg algorithms. SecAgg and SecAgg+ refer to the algorithm proposed by \cite{bonawitz2017practical} and \cite{bell2020secure} respectively. $n$ and $l$ denote client size and model size, respectively.}
\label{tab:secagg}
\centering
\begin{small}
\scalebox{0.92}{
\begin{tabular}{lccccc} % 使用 tabularx 环境
\toprule
 & \multicolumn{2}{c}{Server} & \multicolumn{2}{c}{Client} & \multirow{2}{*}{Rounds}  \\ 
 \cmidrule(l){2-5} 
 & Computation & Communication & Computation & Communication &   \\ 
 \midrule
SecAgg & $O(n^2l)$ & $O(nl+n^2)$ & $O(nl+n^2)$ & $O(l+n)$ & 4 \\
SecAgg+ & $O(nl\log n +n\log^2 n)$ & $O(nl+n\log n)$ & $O(l\log n+\log^2 n)$ & $O(l+\log n)$ & 3 \\
FastSecAgg & $O(l\log n)$ & $O(nl+n^2)$ & $O(l\log n)$ & $O(l+ n)$ & 3 \\
LightSecAgg  & $O(nl\log^2 n)$ & $O(nl)$ & $O(nl\log^2 n)$ & $O(nl)$ & 2 \\
SAFELearn  & $O(nl)$ & $O(nl)$ & $O(l)$ & $O(l)$ & 2 \\
Two-server ASS  & $O(nl)$ & $O(nl)$ & $O(l)$ & $O(l)$ & 1 \\
\bottomrule
\end{tabular}}
\end{small}
\vspace{-2mm} % Adjust vertical space as needed
\end{table*}

\section{Preliminaries}
\subsection{Additive Secret Sharing}

Additive secret sharing (ASS) \cite{cramer2015secure} divides a secret $x\in \mathbb{F}_p$ from a finite field into $n$ shares, such that $\sum_{i=1}^n x_i\ (\text{mod}\ p) = x$. Consequently, any $n-1$ shares reveal nothing about the secret $s$. Furthermore, given two secret shares $\llbracket x\rrbracket=(x_1, ..., x_n)$ and $\llbracket y\rrbracket=(y_1, ..., y_n)$ from $\mathbb{F}_p$, it holds that $\llbracket x+y\rrbracket=(x_1+y_1, ..., x_n+y_n)$.

\subsection{Function Secret Sharing}
\label{app:fsssecurity}
In this section we formally define the correctness and security properties of FSS scheme.
\begin{definition} [FSS Correctness and Security]
\label{def:fsssecurity}
Let $\text{FSS}=(\text{FSS.Gen}, \text{FSS.Eval})$ be a FSS scheme for a function class $\mathcal{F}$, satisfying the following properties:
\begin{itemize}
    \item \textbf{Correctness:} For every $x$ in the domain of $f$, it holds that:
\begin{equation}
    \text{Pr}\left(\sum_{i=1}^2 \text{FSS.Eval}(k_i, x)=f(x)\in\mathbb{F}: (k_1, k_2)\leftarrow \text{FSS.Gen}(1^{\lambda}, f)\right) = 1.
\end{equation}
    \item \textbf{Security:} For any party $s\in \{1,2\}$, there exists  a PPT algorithm $\text{Sim}$ (simulator), such that for every function $f\in \mathcal{F}$, the outputs of the following experiments $\text{REAL}$ and $\text{IDEAL}$ are computationally indistinguishable:
    \begin{itemize}
    \item $\text{REAL}(1^{\lambda}, f)= \{k_s: (k_1, k_2)\leftarrow \text{FSS.Gen}(1^{\lambda}, f)\}$
    \item $\text{IDEAL}(1^{\lambda}, f, \mathcal{F})= \{k_s\leftarrow \text{Sim}(1^{\lambda}, \mathcal{F})\}$
    \end{itemize}
\end{itemize}
\end{definition}

\section{Standardization of Uploaded Item Size}
\label{app:pad}
To conceal $m_u'$ from the server, a uniform $m'$ can be applied to all users. An optimal $m'$ should be substantially smaller than $m$ to reduce communication overhead, yet not excessively small to encompass the rated items of a majority of users. To determine a suitable value of $m'$, the server can compute the average number of rated items from all users via a SecAgg protocol and select $m'$ as follows:
\begin{equation}
    m' = \alpha \cdot \frac{1}{n} \cdot \sum_{u} m_u',
\end{equation}
where $\alpha$ is a pre-specified multiplier on the average. Note that the SecAgg operation on the number of rated items is cheap, incurring $O(1)$ communication and computation overheads per user.

Given the unified $m'$, each user can: (1) convert their target item indices set into size of $m'$ according to Algorithm \ref{alg:padidx}, and (2) standardize their non-zero updates for item embedding to be a $m'\times d$ matrix according to Algorithm \ref{alg:pademb}. 

\begin{algorithm}[htb]
   \caption{PadOrTruncIdx}
   \label{alg:padidx}
\begin{algorithmic}
    \STATE \textbf{Input:} $m'$ and $\mathcal{I}_u=\{i_1,...,i_{m_u'}\}$.
    \STATE \textbf{Output:} $\mathcal{I}'_u=\{i_1,...,i_{m'}\}$.
    \IF{$|\mathcal{I}_u|<m'$}
    \STATE Randomly sample $m'-|\mathcal{I}_u|$ elements from $\mathcal{I} \backslash \mathcal{I}_u$, and insert them into $\mathcal{I}_u$ to form $\mathcal{I}'_u$.
    \ELSIF{$m'_u>m'$}
    \STATE Randomly sample $m'$ elements from $\mathcal{I}_u$ to form $\mathcal{I}'_u$
    \ELSE
    \STATE Let $\mathcal{I}'_u=\mathcal{I}_u$
    \ENDIF
    \STATE \textbf{return} $\mathcal{I}'_u$
\end{algorithmic}
\end{algorithm}

\begin{algorithm}[htb]
   \caption{PadOrTruncEmb}
   \label{alg:pademb}
\begin{algorithmic}
    \STATE \textbf{Input:} $m'$ and $\mathbf{g}_{Q_u}\in \mathbb{R}^{m'_u\times d}$.
    \STATE \textbf{Output:} $\mathbf{g}'_{Q_u}\in \mathbb{R}^{m'\times d}$.
    \IF{$m'_u<m'$}
    \STATE Create padding matrix of zero elements $\mathbf{0}\in \mathbb{R}^{(m'-m'_u)\times d}$
    \STATE Concatenate $\mathbf{g}_{Q_u}$ and $\mathbf{0}$ to form $\mathbf{g}'_{Q_u}\in \mathbb{R}^{m'\times d}$
    \ELSIF{$m'_u>m'$}
    \STATE Randomly sample $m'$ rows from $\mathbf{g}_{Q_u}$ to form $\mathbf{g}'_{Q_u}\in \mathbb{R}^{m'\times d}$
    \ELSE
    \STATE Let $\mathbf{g}'_{Q_u}=\mathbf{g}_{Q_u}$
    \ENDIF
    \STATE \textbf{return} $\mathbf{g}'_{Q_u}$
\end{algorithmic}
\end{algorithm}

\section{Secure Aggregation on Dense Update}
\label{app:densesecagg}

We employ additive secret sharing for SecAgg on the dense update $\mathbf{g}_{\theta}$. In particular, user $u$ generates a pair of additive secret shares for the gradients $\llbracket \mathbf{g}_{\theta} \rrbracket=(\mathbf{v}_{\theta}^1, \mathbf{v}_{\theta}^2)$, and sends the secret shares to the corresponding servers. Each server $s$ aggregates the secret shares from all participating users:
\begin{equation}
\label{eq:assagg}
    \mathbf{v}_{\theta}^s=\sum_{u} \mathbf{v}_{\theta_u}^s 
\end{equation}
Same as step 4 in Section \ref{sec:sparseagg}, the two servers can subsequently collaborate to reconstruct the plaintext aggregated update.

\section{Algorithm of SecEmb}
\label{app:algo}

Algorithm \ref{alg:secemb} outlines the final version of SecEmb, which consists of two modules: (1) privacy-preserving embedding retrieval, and (2) secure update aggregation.

\begin{algorithm}[!htbp]
   \caption{SecEmb}
   \label{alg:secemb}
\begin{algorithmic}
   \STATE \textbf{Server $s \in \{0, 1\}$:}
   \STATE \textbf{Initialize} public parameters $\Theta_s$.
   \FOR{$t\in [1,T]$} 
   \STATE \textbf{Privacy-preserving embedding retrieval:}
   \STATE - Receive FSS keys $\{reK_{u,i}^s\}_{i\in [m']}$ from users $u\in \mathcal{A}_t$.
   \STATE - Compute secret shares of item embedding via equation \ref{eq:retserver}, and store immediate values from the evaluation $(t_{u,i,j}, s_{u,i,j})=\text{FSS.PathEval}(s, reK_{u,i}^s, j)$.
   \STATE - Send secret shares $\{\mathbf{v}_{u, i}^s\}_{i\in [m']}$ to users $u\in \mathcal{A}_t$.
   \STATE \textbf{Secure update aggregation:}
   \STATE - Receive partial FSS keys $\{CW_{u,i}\}_{i\in [m']}$ and secret shares for dense update $\mathbf{v}_{\theta_u}^s$ from users $u\in \mathcal{A}_t$.
   \STATE - Compute the secret shares of the aggregated sparse update via: \\
   \STATE \ \ \ \ $\mathbf{v}_{Q_j}^s = \sum_{u} \sum_{i\in [m']} \text{FSS.ConvertEval}(s, t_{u,i,j}, s_{u,i,j}, CW_{u,i})$ for $j \in \mathcal{I}$.
   \STATE - Aggregate the secret shares of the dense update via equation \ref{eq:assagg}.
   \STATE - \textbf{if $s=0$ then} 
   \STATE \ \ \ \ \ \  - Receive the aggregated secret shares $(\mathbf{v}_{Q}^1, \mathbf{v}_{\theta}^1)$ from server $1$
   \STATE \ \ \ \ \ \ - Recover the gradients for public parameters $\mathbf{g}_Q, \mathbf{g}_{\theta}$.
   \STATE \ \ \ \ \ \ - Update public parameters $\Theta_s=(Q, \theta)$ with the gradients.
   \STATE \ \ \ \ \ \ - Synchronize public parameters with server $1$.
   \STATE - \textbf{else}
   \STATE \ \ \ \ \ \ - Send the aggregated secret shares $(\mathbf{v}_{Q}^1, \mathbf{v}_{\theta}^1)$ to server $0$.
   \STATE \ \ \ \ \ \ - Receive updated public parameters from server $0$.
   \STATE - \textbf{end if}
   \ENDFOR
   \STATE 
   \STATE \textbf{User $u\in \mathcal{U}$:}
   \FOR{$t\in [1,T]$} 
   \IF{$u\in \mathcal{A}_t$}
   \STATE \textbf{Privacy-preserving embedding retrieval:}
   \STATE - Standardize the size of target indices, i.e., $\mathcal{I}'_u=\text{PadOrTruncIdx}(m', \mathcal{I}_u)$.
   \STATE - Encode each index in $\mathcal{I}'_u$ with a point function, obtaining $\{f_{u,i}\}_{i\in [m']}$. 
   \STATE - Generate FSS keys $(reK_{u,i}^0, reK_{u,i}^1)=\text{FSS.Gen}(1^{\lambda}, f_{u,i})$ for $i\in [m']$, and store the intermediate values from the generation $(t_{u,i}^1, s_{u,i}^0, s_{u,i}^1)=\text{FSS.PathGen}(1^{\lambda}, \text{idx}(i))$ for $i\in [m']$.
   \STATE - Send $\{rek_{u,i}^s\}_{i\in [m']}$ to server $s\in \{0, 1\}$.
   \STATE - Receive $\{\mathbf{v}_{u, i}^s\}_{i\in [m']}$ from server $s\in \{0, 1\}$, discard irrelevant embeddings, and reconstruct targeted embedding $Q^u$ using equation \ref{eq:retrecover}.
   \STATE \textbf{Secure update aggregation:}
   \STATE - Calculate gradients locally and update private parameters $\Theta_p$.
   \STATE - Construct additive secret shares $(\mathbf{v}_{\theta_u}^0, \mathbf{v}_{\theta_u}^1)$ for dense gradient $\mathbf{g}_{\theta_u}$.
   \STATE - Pad or truncate the sparse gradient into a $m'\times d$ matrix, $\mathbf{g}'_{Q^u}=\text{PadOrTruncEmb}(m', \mathbf{g}_{Q^u})$.
   \STATE - Generate partial FSS keys for the sparse gradients $\{CW_{u,i}\}_{i\in [m']}=\text{FSS.ConvertGen}(1^{\lambda}, t_{u,i}^1, s_{u,i}^0, s_{u,i}^1, \mathbf{g}'_{Q_i^u})$ for $i\in [m']$.
   \STATE - Send $(\mathbf{v}_{\theta_u}^s, \{CW_{u,i}\}_{i\in [m']})$ to server $s\in \{0, 1\}$.
   \ENDIF
   \ENDFOR
\end{algorithmic}
\end{algorithm}

\begin{algorithm}[!htbp]
   \caption{FSS.ConvertEval}
   \label{alg:fsscveval}
   Let $\text{Convert}_{\mathbb{G}}: \{0,1\}^{\lambda} \rightarrow \mathbb{G}$ be a map converting a random $\lambda$-bit string to a pseudorandom group element of $\mathbb{G}$.
\begin{algorithmic}
    \STATE \textbf{Input:} $b$, $t^{(n)}$, $s^{(n)}$, $CW^{(n+1)}$
    \STATE \textbf{Output:} $v \in \mathbb{G}$
    \STATE $v=(-1)^b\left[\text{Convert}(s^{(n)})+ t^{(n)}\cdot CW^{(n+1)}\right]$
    \STATE \textbf{return} $v$
\end{algorithmic}
\end{algorithm}

\begin{algorithm}[!htbp]
   \caption{FSS.ConvertGen}
   \label{alg:fssconvert}
   Let $\text{Convert}_{\mathbb{G}}: \{0,1\}^{\lambda} \rightarrow \mathbb{G}$ be a map converting a random $\lambda$-bit string to a pseudorandom group element of $\mathbb{G}$.
\begin{algorithmic}
    \STATE \textbf{Input:} $1^{\lambda}$, $t_1^{(n)}$, $s_0^{(n)}$, $s_1^{(n)}$, $\beta\in \mathbb{G}$
    \STATE \textbf{Output:} $CW^{(n+1)}\in \mathbb{G}$
    \STATE $CW^{(n+1)}\leftarrow (-1)^{t_1^{(n)}} \left[\beta-\text{Convert}(s_0^{(n)})+\text{Convert}(s_1^{(n)})\right]$
    \STATE \textbf{return} $CW^{(n+1)}$
\end{algorithmic}
\end{algorithm}

\begin{algorithm}[!htbp]
   \caption{FSS.PathEval}
   \label{alg:fsspatheval}
   Let $G: \{0,1\}^{\lambda} \rightarrow \{0,1\}^{2(\lambda+1)}$ a pseudorandom generator.
\begin{algorithmic}
    \STATE \textbf{Input:} $b$, $k_b$ and $x$
    \STATE \textbf{Output:} $t^{(n)}\in \{0,1\}$ and $s^{(n)}\in \{0,1\}^{\lambda}$
    \STATE Parse $k_b=s^{(0)}||t^{(0)}||CW^{(1)}||\cdots||CW^{(n+1)}$
    \FOR{$i=1$ to $n$}
    \STATE Parse $CW^{(i)}=s_{CW}||t_{CW}^L||t_{CW}^R$
    \STATE $\tau^{(i)}\leftarrow G(s^{(i-1)}) \oplus \left(t^{(i-1)} \cdot [s_{CW} || t_{CW}^L ||s_{CW}||t_{CW}^R]\right)$
    \STATE Parse $\tau^{(i)}=s^L||t^L||s^R||t^R \in \{0,1\}^{2(\lambda+1)}$
    \IF{$x_i=0$}
    \STATE $s^{(i)}\leftarrow s^L$, $t^{(i)} \leftarrow t^L$
    \ELSE
    \STATE  $s^{(i)}\leftarrow s^L$, $t^{(i)}\leftarrow t^L$
    \ENDIF
    \ENDFOR
    \STATE \textbf{return} $(t^{(n)}, s^{(n)})$
\end{algorithmic}
\end{algorithm}

\begin{algorithm}[!htbp]
   \caption{FSS.PathGen}
   \label{alg:fsspath}
   Let $G: \{0,1\}^{\lambda} \rightarrow \{0,1\}^{2(\lambda+1)}$ a pseudorandom generator.
\begin{algorithmic}
    \STATE \textbf{Input:} $1^{\lambda}$ and $\alpha$
    \STATE \textbf{Output:} $t_1^{(n)}\in \{0,1\}$, $s_0^{(n)}\in \{0,1\}^{\lambda}$, and $s_1^{(n)}\in \{0,1\}^{\lambda}$
    \STATE Let $\alpha=\alpha_1, ..., \alpha_n \in \{0,1\}^n$ be the bit decomposition of $\alpha$.
    \STATE Sample random $s_0^{(0)}\leftarrow \{0, 1\}^{\lambda}$ and $s_1^{(0)}\leftarrow \{0, 1\}^{\lambda}$.
    \STATE Sample random $t_0^{(0)}\leftarrow \{0, 1\}$ and let $t_1^{(0)}\leftarrow t_0^{(0)} \oplus 1$.
    \FOR{$i=1$ to $n$}
    \STATE $s_0^L||t_0^L||s_0^R||t_0^R\leftarrow G(s_0^{(i-1)})$
    \STATE $s_1^L||t_1^L||s_1^R||t_1^R\leftarrow G(s_1^{(i-1)})$
    \IF{$\alpha_i=0$} 
    \STATE $\text{Keep}\leftarrow L$, $\text{Lose}\leftarrow R$
    \ELSE
    \STATE $\text{Keep}\leftarrow R$, $\text{Lose}\leftarrow L$
    \ENDIF
    \STATE $s_{CW}\leftarrow s_0^{\text{Lose}} \oplus s_1^{\text{Lose}}$
    \STATE $t_{CW}^L\leftarrow t_0^L \oplus t_1^L \oplus \alpha_i \oplus 1$
    \STATE $t_{CW}^R\leftarrow t_0^R \oplus t_1^R \oplus \alpha_i$
    \STATE $CW^{(i)} \leftarrow s_{CW} || t_{CW}^L || t_{CW}^R$
    \STATE $s_b^{(i)} \leftarrow s_b^{\text{Keep}} \oplus t_b^{(i-1)} \cdot s_{CW}$ for $b=0,1$
    \STATE $t_b^{(i)} \leftarrow t_b^{\text{Keep}} \oplus t_b^{(i-1)} \cdot t_{CW}^{\text{Keep}}$ for $b=0,1$
    \ENDFOR
    \STATE \textbf{return} $(t_1^{(n)}, s_0^{(n)}, s_1^{(n)})$
\end{algorithmic}
\end{algorithm}

\section{Formal Security Analysis}
\label{app:secproof}
For security, we require that the two servers should be non-colluding. However, no restriction is placed on the collusion between one server and the clients. User privacy is guaranteed as long as at least one server is honest, even if the other colludes with any number of clients.
\subsection{Security Analysis for Private Embedding Retrieval}
Let $\mathcal{C}\subset \mathcal{U}\cup \{b\}$ ($b\in \{0,1\}$) denote the union of one server and any subset of the users. Given a security parameter $\lambda$, let $\text{Real}_{\lambda, ret}^\mathcal{C}$ be the combined view of colluding parties $\mathcal{C}$ in experiments executing privacy-preserving embedding retrieval in SecEmb. Denote $\mathcal{I}_\mathcal{U}$ as the target indices of all users. We show that the joint view of each non-colluding server and the clients can be simulated given the allowable leakage. In other words, the collusion between one server and any number of clients reveals no information about the honest clients.

\begin{theorem}[Security of private embedding retrieval]
There exists a PPT simulator $\text{Sim}_{\lambda, ret}^\mathcal{C}$, such that for all user input $\mathcal{I}_\mathcal{U}$ and $\mathcal{C}\subset \mathcal{U}\cup \{b\}$ ($b\in \{0,1\}$), the output of $\text{Sim}_{\lambda, ret}^\mathcal{C}$ and $\text{Real}_{\lambda, ret}^\mathcal{C}$ are computationally indistinguishable:
\begin{equation}
    \text{Real}_{\lambda, ret}^\mathcal{C} (1^\lambda, \mathcal{I}_{\mathcal{U}\backslash \mathcal{C}}) = \text{Sim}_{\lambda, ret}^\mathcal{C} (1^\lambda, (m', \mathbb{G}_1, ..., \mathbb{G}_{m'})).
\end{equation}
\end{theorem}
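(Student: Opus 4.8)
The plan is to give a simulation-based proof in the standard real/ideal paradigm, reducing security entirely to the single-key security of the underlying FSS scheme (the Security property of Definition~\ref{def:fsssecurity}) via a hybrid argument; this mirrors the classical argument for two-server private information retrieval built from distributed point functions. First I would fix the corrupt coalition $\mathcal{C} = \{b\} \cup \mathcal{C}_U$ with $b \in \{0,1\}$ and $\mathcal{C}_U \subseteq \mathcal{U}$, and carefully enumerate the contents of $\text{Real}_{\lambda, ret}^\mathcal{C}$. The coalition's joint view consists of: (i) the inputs and internal randomness of every corrupt user $u \in \mathcal{C}_U$, together with those users' own keys for both servers and their reconstructed embeddings; (ii) the public item-embedding matrix $Q$; (iii) for every honest user $u \in \mathcal{U}\setminus\mathcal{C}_U$ and every slot $i \in [m']$, the \emph{single} FSS key $reK_{u,i}^b$ that $u$ sends to the corrupt server $b$; and (iv) the shares the corrupt server deterministically computes, i.e. $\mathbf{v}_{u,i}^b = \sum_j \text{FSS.Eval}(reK_{u,i}^b, j)\cdot Q_j$, and any message a corrupt user receives from the honest server $1-b$, which is itself a deterministic function of that corrupt user's own key and the public $Q$.

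The key structural observation I would establish next is that every component of this view \emph{except} the honest keys in (iii) is a deterministic function of the corrupt parties' own data and the public $Q$, and hence carries no information about the honest indices $\mathcal{I}_{\mathcal{U}\setminus\mathcal{C}}$. In particular, correctness of the honest users' reconstruction in \eqref{eq:retrecover} happens at honest users and therefore lies outside the coalition view, so the simulated shares never need to reconstruct the correct embedding. With this in hand I would define the simulator $\text{Sim}_{\lambda, ret}^\mathcal{C}$: using its inputs $(m', \mathbb{G}_1,\dots,\mathbb{G}_{m'})$, which pin down the point-function class $\mathcal{F}_i$ (domain $\mathcal{I}$, output group $\mathbb{G}_i$) for each slot $i \in [m']$, the simulator invokes the FSS simulator $\text{Sim}(1^\lambda, \mathcal{F}_i)$ once per honest user $u$ and slot $i$ to produce a fake key $\widetilde{reK}_{u,i}^b$, and then assembles the remaining view (corrupt parties' data, the public $Q$, and all induced server shares recomputed from the fake keys) exactly as the protocol prescribes.

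Finally I would prove computational indistinguishability of $\text{Real}_{\lambda, ret}^\mathcal{C}$ and $\text{Sim}_{\lambda, ret}^\mathcal{C}$ by a hybrid argument that replaces the real honest keys with simulated ones one at a time, over all $|\mathcal{U}\setminus\mathcal{C}_U|\cdot m' = \text{poly}(\lambda)$ of them. The decisive point enabling each replacement is that the honest server $1-b \notin \mathcal{C}$, so the coalition sees only one of the two keys of each honest point function $f_{u,i}$; the single-key FSS security property therefore applies and each adjacent pair of hybrids is computationally indistinguishable. Summing over the polynomially many hybrids yields the claim, and because the simulator receives only $(m', \mathbb{G}_1,\dots,\mathbb{G}_{m'})$ and never the honest indices, this establishes that retrieval leaks nothing about the honest users' rated items.

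The hard part will be the bookkeeping in the hybrid reduction: I must verify that replacing a single honest key gives a clean reduction to the FSS challenger, meaning the induced server share $\mathbf{v}_{u,i}^b$ and every downstream message are recomputed consistently from the challenge key without ever requiring the honest index idx$(i)$. A secondary item to check is that standardizing each query to size $m'$ via \text{PadOrTruncIdx} (Algorithm~\ref{alg:padidx}) makes the number and shape of transmitted keys independent of the true $m_u'$, so that $m'$ alone is a sound leakage term and the padded ``dummy'' indices are themselves hidden by the same FSS security invocation.
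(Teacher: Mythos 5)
Your proposal is correct and follows essentially the same route as the paper's proof: a hybrid argument over the honest users' $m'$ keys sent to the corrupt server, each step justified by single-key FSS security (possible precisely because the other server is honest), terminating in a simulator that needs only $(m', \mathbb{G}_1, \ldots, \mathbb{G}_{m'})$. The only differences are cosmetic---the paper instantiates the simulated keys concretely as uniformly random seeds and correction words (exploiting the specific key format of the underlying FSS scheme) and runs its hybrid per slot $i \in [m']$ across all honest users at once, whereas you invoke the generic FSS simulator as a black box and replace one key at a time, which is, if anything, the cleaner reduction.
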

\begin{proof}
For each $j\in \{0, 1, ..., m'\}$, we consider a distribution $\text{Hyb}_j$ as follows:
\begin{itemize}[noitemsep,topsep=1.2pt]
    \item If $i\leq j$, each user $u \in \mathcal{U}\backslash \mathcal{C}$ construct $rek_{u,i}^b$ to server $b$ by: (1) sample $s_b^{(0)}\leftarrow \{0,1\}^{\lambda}$ at random, and $t_b^{(0)}=b$; (2) choose $CW^{(1)}$, ..., $CW^{(\lceil \log m \rceil)}\leftarrow \{0,1\}^{\lambda+2}$ at random; (3) sample $CW^{(\lceil \log m \rceil+1)}\leftarrow \mathbb{G}$ at random; (4) output $reK_{u,i}^s=s_b^{(0)}||t_b^{(0)}||CW^{(1)}||\cdots||CW^{(\lceil \log m+1 \rceil)}$.
    \item If $i> j$, compute $reK_{u,i}^b$ to server $b$ honestly using the function secret sharing algorithm.
    \item The output of the experiment is $\{reK_{u,i}^b\}_{i\in [m']}$ to server $b\in \{0,1\}.$
\end{itemize}
The FSS security ensures that $\text{Hyb}_j$ and $\text{Hyb}_{j+1}$ are computationally indistinguishable for $j\in  \{0, 1, ..., m'-1\}$. Note that $\text{Hyb}_0$ corresponds to the $\text{Real}_{\lambda, ret}^\mathcal{C}$ distribution in the execution of SecEmb, where as $\text{Hyb}_{m'}$ generates a completely random key. 

Hence, we complete the proof.
\end{proof}

\subsection{Security Analysis for Secure Update Aggregation}
\begin{proof}
 For each $j\in \{0, 1, ..., m'\}$, we consider a distribution $\text{Hyb}_j$ as follows:
\begin{itemize}[noitemsep,topsep=1.2pt]
    \item Compute $\mathbf{v}_{\theta_u}^b$ honestly using ASS to server $b$.
    \item If $i\leq j$, sample $CW_{u,i}\leftarrow \mathbb{G}$ at random.
    \item If $i> j$, compute $CW_{u,i}\leftarrow \mathbb{G}$ honestly following steps in SecEmb.
    \item The output of the experiment is $\left(\mathbf{v}_{\theta_u}^b, \{CW_{u,i}\}_{i\in [m']}\right)$ to server $b\in \{0,1\}$.
\end{itemize}
The proof of Theorem \ref{thm:secupdate} requires to demonstrate that $\text{Hyb}_j$ and $\text{Hyb}_{j+1}$ are computationally indistinguishable. To justify this argument, we consider a intermediate distribution $\text{Hyb}_{j\rightarrow j+1}$ between $\text{Hyb}_j$ and $\text{Hyb}_{j+1}$:
\begin{itemize}[noitemsep,topsep=1.2pt]
    \item Compute $\mathbf{v}_{\theta_u}^b$ honestly using ASS to server $b$.
    \item If $i < j$, sample $CW_{u,i}\leftarrow \mathbb{G}$ at random.
    \item If $i = j$, construct $CW_{u,i}=\text{HybCW}(1^\lambda, \text{idx}_u(i))$ via Algorithm \ref{alg:hybcw}, where $\text{idx}_u(i)$ denotes the global index of the $i$-th item for user $u$.
    \item If $i> j$, compute $CW_{u,i}\leftarrow \mathbb{G}$ honestly following steps in SecEmb.
    \item The output of the experiment is $\left(\mathbf{v}_{\theta_u}^b, \{CW_{u,i}\}_{i\in [m']}\right)$ to server $b\in \{0,1\}$.
\end{itemize}

\begin{algorithm}[!htbp]
   \caption{HybCW}
   \label{alg:hybcw}
   Let $\text{Convert}_{\mathbb{G}}: \{0,1\}^{\lambda} \rightarrow \mathbb{G}$ be a map converting a random $\lambda$-bit string to a pseudorandom group element of $\mathbb{G}$.
\begin{algorithmic}
    \STATE \textbf{Input:} $1^\lambda$, $\alpha$
    \STATE \textbf{Output:} $CW^{(n+1)}$
    \STATE  Let $\alpha=\alpha_1, ..., \alpha_n \in \{0,1\}^n$ be the bit decomposition of $\alpha$.
    \STATE  Sample $s_b^{(0)}\leftarrow \{0,1\}^{\lambda}$, and let $t_b^{(0)}=b$.
    \FOR{$i=1$ to $\lceil \log m \rceil$}
    \STATE Compute $s_b^L||t_b^L||s_b^R||s_b^L=G(s_b^{(i-1)})$
    \IF{$\alpha_i=0$}
    \STATE $\text{Keep}\leftarrow L$, $\text{Lose}\leftarrow R$
    \ELSE
    \STATE $\text{Keep}\leftarrow R$, $\text{Lose}\leftarrow L$
    \ENDIF
    \STATE Sample $CW^{(i)}\leftarrow \{0,1\}^{\lambda+2}$
    \STATE Parse $CW^{(i)}=s_{CW}||t_{CW}^L||t_{CW}^R$
    \STATE $s_b^{(i)}\leftarrow s_b^{\text{Keep}}\oplus t_b^{(i-1)} \cdot s_{CW}$
    \STATE $t_b^{(i)}\leftarrow t_b^{\text{Keep}}\oplus t_b^{(i-1)} \cdot t_{CW}^{\text{Keep}}$
    \ENDFOR
    \STATE $CW^{(n+1)}\leftarrow (-1)^{t_1^{(n)}} \left[\beta-\text{Convert}(s_0^{(n)})+\text{Convert}(s_1^{(n)})\right]$
    \STATE \textbf{return} $CW^{(n+1)}$
\end{algorithmic}
\end{algorithm}

The security of the pseudorandom generator $G$ ensures that $\text{Hyb}_j$ and $\text{Hyb}_{j\rightarrow  j+1}$ are computationally indistinguishable (see Claim 3.7 in \cite{boyle2016function}). The security of the pseudorandom $\text{Convert}$ ensures that $\text{Hyb}_{j\rightarrow  j+1}$ and $\text{Hyb}_{j+1}$ are computationally indistinguishable (see Claim 3.8 in \cite{boyle2016function}).

Next,we consider the following distribution $\text{Hyb}_{m'+1}$:
\begin{itemize}[noitemsep,topsep=1.2pt]
    \item Sample $\mathbf{v}_{\theta_u}^b\leftarrow \mathbb{G}_{\theta}$ randomly to server $b$.
    \item Sample $CW_{u,i}\leftarrow \mathbb{G}$ at random for $i\in [m']$.
    \item The output of the experiment is $\left(\mathbf{v}_{\theta_u}^b, \{CW_{u,i}\}_{i\in [m']}\right)$ to server $b\in \{0,1\}$.
\end{itemize}
The properties of additive secret sharing guarantee that the distribution of $\text{Hyb}_{m'+1}$ is identical to $\text{Hyb}_{m'}$. Note that $\text{Hyb}_0$ corresponds to the $\text{Real}_{\lambda, agg}^\mathcal{C}$ distribution in the execution of SecEmb, where as $\text{Hyb}_{m'+1}$ outputs a set of random secrets and keys. 

Hence, we complete the proof.
\end{proof}

\section{Providing Differential Privacy}
\label{app:dp}
\subsection{Implementation of Differentially Private Protocol}
During the update aggregation stage, SecEmb ensures that each server learns only the aggregated updates. Consequently, we only need to ensure that the aggregated updates-equivalent to the servers' view-satisfy differential privacy (DP) throughout the training process.

Our protocol is possible to achieve the record-level DP that obfuscates a single user-item interaction's contribution \cite{liu2022privacy,wei2020federated,10.5555/2832415.2832494}. To implement DP in the two-server setting, each user conducts per-sample clipping on their local gradients $\bar{\mathbf{g}}\leftarrow \mathbf{g} \cdot \max\{1, \Delta_2/\mathbf{g}\}$ \cite{abadi2016deep}, and secret shares the clipped gradients for aggregation. On computing the secret shares of aggregated update $v_s$, each server $s\in \{0,1\}$ adds Gaussian noise independently $v_s' \leftarrow v_s + \mathcal{N}(0, \sigma^2)$. To achieve $(\epsilon, \delta)$-DP, the noise scale can be set as:
\begin{equation}
    \sigma = \sqrt{\ln(1.25/\delta)}\cdot \Delta_2/\epsilon.
\end{equation}

\subsection{Empirical Analysis}
We deploy the SecEmb with $(\epsilon, \delta)$-DP guarantee on ML1M dataset under experiment settings described in Section \ref{sec:expsetting}, using batch sizes of 100 and 500. The privacy budget is analyzed with moments accountant \cite{abadi2016deep}, which provides a tight bound over multiple iterations. Figure \ref{fig:dp} shows that: (1) MF has the best trade-off between privacy and utility, where the uility loss is within 2\% for $\epsilon \geq 2$. This suggests that models with fewer parameters better preserve utility under similar privacy guarantees. (2) Increasing the batch size improves the privacy-utility trade-off, as noise is amortized over more samples.
\begin{figure*}[htp]
    \centering
    \includegraphics[width=0.95\linewidth]{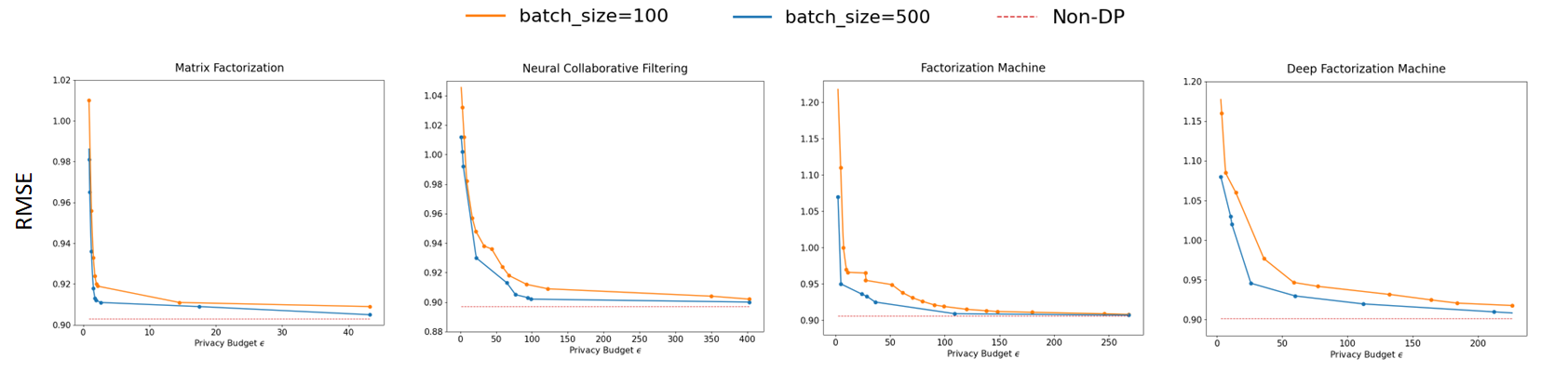}
    \caption{Performance on ML1M dataset under various differential privacy budget $\epsilon$.}
    \label{fig:dp}
\end{figure*}

While DP is sufficient to provide formal privacy guarantee, combining DP with SecAgg reduces the overall magnitude of noise, thus offering significantly better privacy-utility trade-off than local DP. In Figure \ref{fig:dplocal}, we compare the utility between SecEmb and DP-FedRec with MF model. DP-FedRec operates without SecAgg, where each user independently adds Guassian noises to their uploaded gradients to satisfy $(\epsilon, \delta)$-local DP. Integrating SecAgg with DP improve the performance by over 57\% under $\epsilon<1.5$ on ML1M, and over 47\% under $\epsilon<13$ on Yelp.

\begin{figure}[htp]
    \centering
    \includegraphics[width=0.3\linewidth]{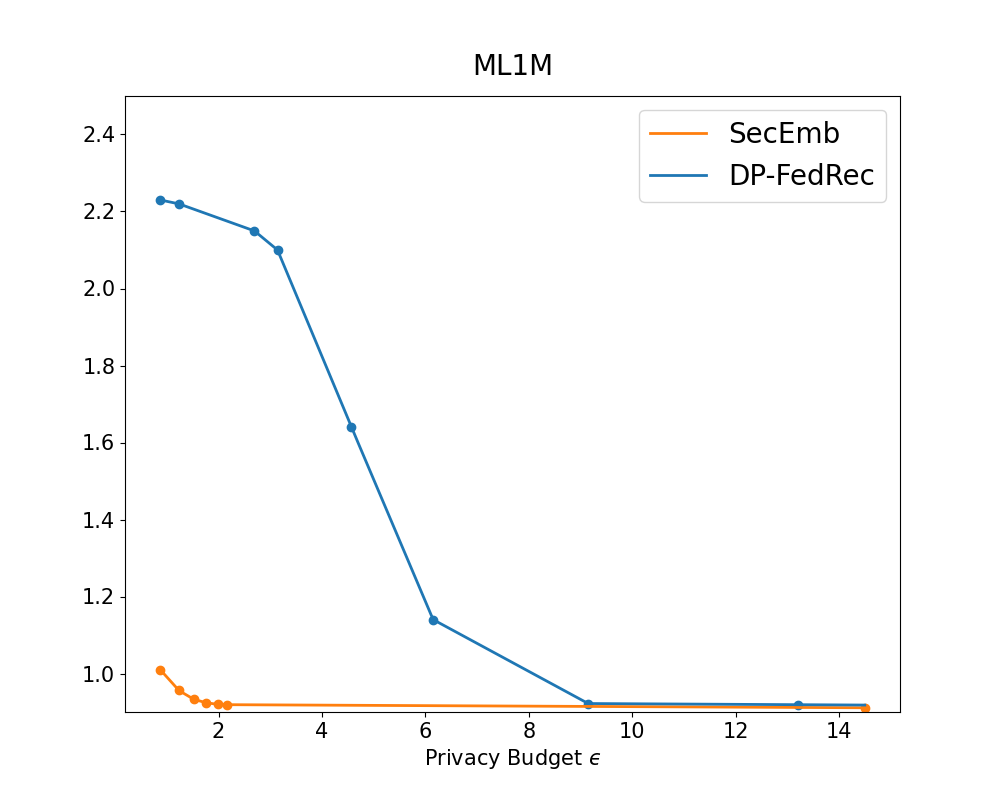}
    \includegraphics[width=0.3\linewidth]{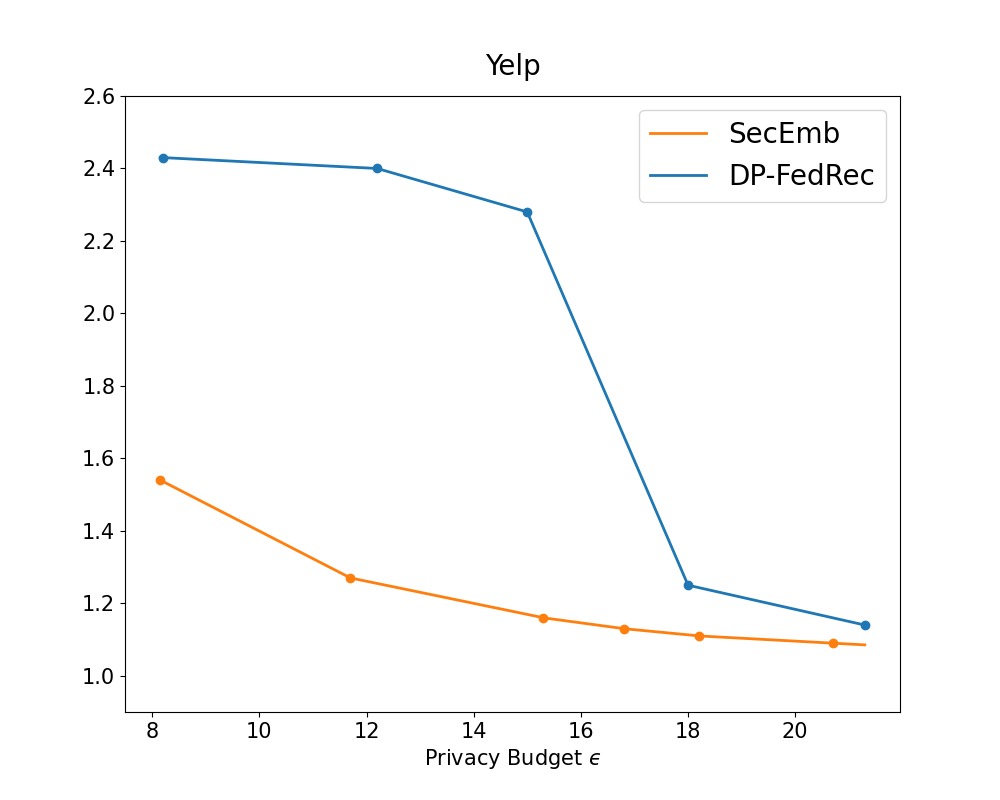}
    \caption{Performance of SecEmb and DP-FedRec under various privacy budget $\epsilon$ using MF.}
    \label{fig:dplocal}
\end{figure}

\section{Dataset and Pre-processing}
\label{app:data}
For each dataset, we encode the user and item features into binary vectors for model training. The features we select for binary encoding are given as follows:
\begin{itemize}
    \item ML100K: movie genre, user gender, user age, and user occupation.
    \item ML1M: movie genre, user gender, user age, and user occupation.
    \item ML10M: movie genre.
    \item ML25M: movie genre.
    \item Yelp: restaurant state.
\end{itemize}
The statistics of the datasets are listed in Table \ref{tab:datastat}. 

\begin{table*}[!htbp]
\caption{Statistics of the datasets. Yelp refers to the subset sampled from the whole dataset.}
\label{tab:datastat}
\centering
\begin{tabular}{lcccccc} % 使用 tabularx 环境
\toprule
 & \# Users & \# Items & \# Ratings & \# User Features & \# Item Features & Density \\ 
 \midrule
ML100K & 943 & 1,682 & 100,000 & 84 & 19 & 6.30\% \\
ML1M & 6,040 & 3,883 & 1,000,209 & 30 & 18 & 4.26\% \\
ML10M & 69,878 & 10,681 & 10,000,054 & 0 & 20 & 1.34\%\\
ML25M & 162,541 & 62,423 & 25,000,095 & 0& 20& 0.25\% \\
Yelp & 10,000 & 93,386 & 1,007,956 & 0 & 16 & 0.11\% \\
\bottomrule
\end{tabular}
\vspace{2mm} % Adjust vertical space as needed
\end{table*}

\section{Hyperparameters of Recommender System}
\label{app:hyper}
Each dataset is divided into 80\% training and 20\% testing data. For all cases, the recommender system is trained for 200 epochs. Each user represents an individual client and 100 clients are selected in each iteration. The parameters are updated using Adaptive Moment Estimation (Adam) \cite{kingma2014adam} method. We use the combination of MSE and the regularization term as the loss function. The security parameter is set to $\lambda=128$. Each experiment is run for four rounds and the average values are reported. Table \ref{tab:hyperbase} lists the specific hyperparameters for each dataset and model.

\begin{table*}[!htbp]
\caption{Hyperparameters for federated training of recommender system.}
\label{tab:hyperbase}
\centering
\begin{small}
\scalebox{0.97}{
\begin{tabular}{llccccc} % 使用 tabularx 环境
\toprule
 & & ML100K  & ML1M & ML10M  &ML25M & Yelp\\ 
 \midrule
\multirow{3}{*}{MF} & Embedding size & 64 & 64 & 64 & 64 & 64\\
& Learning rate & 0.025 & 0.025 & 0.01 & 0.01 & 0.01 \\
& Regularization weight & 0.01 & 0.001 & 0.001 & 0.001 & 0.01 \\
 \midrule
\multirow{3}{*}{NCF} & Embedding size & 16 & 16 & 24 & 24 & 20 \\
& Learning rate & 0.001 & 0.0001 & 0.001 & 0.001 & 0.001 \\
& Regularization weight & 0.001 & 0 & 0 & 0 & 0 \\
 \midrule
\multirow{3}{*}{FM} & Embedding size & 64 & 64 & 64 & 64 & 64 \\
& Learning rate & 0.025 & 0.025 & 0.005 & 0.005 & 0.01 \\
& Regularization weight & 0.1 & 0.001 & 0 & 0 & 0.01 \\
 \midrule
\multirow{3}{*}{DeepFM} & Embedding size & 64 & 64 & 64 & 64 & 64\\
& Learning rate & 0.025 & 0.025 & 0.005 & 0.005 & 0.01 \\
& Regularization weight & 0.1 & 0.001 & 0 & 0 & 0.001 \\
\bottomrule
\end{tabular}
}
\end{small}
\vspace{2mm} % Adjust vertical space as needed
\end{table*}

For NCF, we fix the the architecture of the neural network layers to $2d\rightarrow d \rightarrow d/2$. For DeepFM, the neural network layers are fixed to $(l_x+l_y+2)d\rightarrow4d\rightarrow 2d$. We set the number of selected items $m'$ for ML100K, ML1M, ML10M, ML25M, and yelp as 200, 300, 300, 500, and 500, respectively. Table \ref{tab:paramsize} presents the size of sparse and dense parameters, corresponding to the item embedding (including item bias term) and the remaining parameters.

\begin{table*}[htb]
\caption{Size of dense and sparse parameters. \# Sparse, \# Non-zero Spr., and \# Dense denote, respectively, the size of sparse update, size of non-zero elements in sparse update, and size of dense update.}
\label{tab:paramsize}
\centering
\begin{small}
\scalebox{0.97}{
\begin{tabular}{llccccc} % 使用 tabularx 环境
\toprule
 & & ML100K  & ML1M & ML10M  &ML25M & Yelp\\ 
  & & (1.7k Items)  & (3.9k Items) & (10.7k Items) &(62.4k Items)& (93.4k Items)\\ 
 \midrule
\multirow{3}{*}{MF} & \# Sparse & 109,330 & 252,395 & 694,265 & 4,057,495 & 6,070,090 \\
& \# Non-zero Spr. & 6,893 & 10,764 & 9,295 & 9,945 & 6,552 \\
& \# Dense & 0 & 0 & 0 & 0 & 0 \\
 \midrule
\multirow{3}{*}{NCF} & \# Sparse & 55,506 & 128,139 & 523,369 & 3,058,727 & 3,828,826 \\
& \# Non-zero Spr. & 3,499 & 5,465 & 7,007 & 7,497 & 4,133 \\
& \# Dense & 688 & 688 & 1,512 & 1,512 & 1,060 \\
 \midrule
\multirow{3}{*}{FM} & \# Sparse & 109,330 & 252,395 & 694,256 & 4,057,495 & 6,070,090 \\
& \# Non-zero Spr. & 6,893 & 10,764 & 9,295 & 9,945 & 6,552 \\
& \# Dense & 6,696 & 3,121 & 1,301 & 1,301 & 1,041 \\
 \midrule
\multirow{3}{*}{DeepFM} & \# Sparse & 109,330 & 252,395 & 694,256 & 4,057,495 & 6,070,090 \\
& \# Non-zero Spr. & 6,893 & 10,764 & 9,295 & 9,945 & 6,552 \\
& \# Dense & 1,761,065 & 856,370 & 395,798 & 395,798 & 330,002 \\
\bottomrule
\end{tabular}
}
\end{small}
\vspace{2mm} % Adjust vertical space as needed
\end{table*}

In FedRec using SVD and CoLR message compression techniques, the rank of the reduced matrix is specified in Table \ref{tab:rank}. A lower value of rank indicates higher level of reduction ratio.

\begin{table*}[htb]
\caption{Rank for reduced update matrix in FedRec with SVD and CoLR compression techniques.}
\label{tab:rank}
\centering
\begin{small}
\scalebox{0.97}{
\begin{tabular}{lccccc} % 使用 tabularx 环境
\toprule
 &  ML100K  & ML1M & ML10M  &ML25M & Yelp\\ 
 \midrule
MF & 12 & 10&4&4&4 \\
NCF & 10 & 8&4&4&4 \\
FM & 12&10&4&4&4 \\
DeepFM & 12&10&4&4&4 \\
\bottomrule
\end{tabular}
}
\end{small}
\vspace{2mm} % Adjust vertical space as needed
\end{table*}

\section{Comparison among FL Protocols}
\label{app:flcomp}
We compare our SecEmb with the following FL protocols:
\begin{itemize}[itemsep=0.5pt,topsep=1.2pt]
    \item Secure FedRec that downloads full model and utilizes the most efficient SecAgg protocol.
    \item Two FL protocols with sparsity-aware aggregation on top-k sparsified gradients \cite{gupta2021training, aji2017sparse}: Secure Aggregation with Mask Sparsiﬁcation (SecAggMask) \cite{liu2023efficient} and Top-k Sparse Secure Aggregation (TopkSecAgg) \cite{lu2023top}.
    \item Two quantization methods: 8-bit quantization (Bit8Quant) \cite{dettmers20158} and Ternary Quantization (TernQuant) \cite{wen2017terngrad}.
    \item Two low-rank update methods: singular value decomposition (SVD) \cite{nguyen2024towards}, and correlated Low-rank Structure (CoLR) \cite{nguyen2024towards}.
\end{itemize}

Table \ref{tab:compfl} summarizes the comparison along four dimensions:
\begin{itemize}[itemsep=0.5pt,topsep=1.2pt]
    \item \textbf{Secure FL training:} The training process should be secure, or compatible with SecAgg. In other words, the server learns no information about client updates, including non-zero indices and their values, except the aggregated model. SecAggMask and TopkSecAgg are not secure as they leak information about the non-zero indices. Kvsagg leaks more information to the server than SecEmb as it also exposes the exact number of clients who rated each item within a batch beside aggregated results. The quantization and low-rank methods are compatible with SecAgg protocols.
    \item \textbf{Reduced model download:} The client can download model in reduced size from the server, rather than the full model. Only CoLR and SecEmb allow to download the reduced model from the server.
    \item \textbf{Reduced gradient transmission:} The client can upload parameter gradients in reduced size to the server, rather than the entire gradients. The listed algorithms, except secure FedRec, compress the transmitted gradients for communication-efficient aggregation.
    \item \textbf{Operation on reduced model:} The client can store and operate on a reduced-size model locally for improved memory and computation efficiency. Unlike other algorithms, which assume users maintain a full model for local updates, SecEmb enables efficient operation without this requirement.
    \item \textbf{Lossless:} The training process should be lossless. Only secure FedRec and SecEmb are lossless in principle, and thus lossless in practice. Kvsagg has a minor failure rate, occasionally yielding inaccurate aggregation results, whereas SecEmb ensures error-free aggregation.
\end{itemize}

Among the FL protocols, only SecEmb simultaneously ensures efficiency, security, and utility for resource-constrained edge devices.

\begin{table*}[!htb]
\caption{Coarse-grained comparison among FL protocols.}
\label{tab:compfl}
\centering
\begin{small}
\begin{tabular}{lccccc}
\toprule
 & \makecell{Secure FL \\training} & \makecell{Reduced model \\download}& \makecell{Reduced gradient \\transmission} & \makecell{Operation on \\reduced model} & Lossless \\
  \midrule
Secure FedRec & $\checkmark$ & $\times$ & $\times$ & $\times$ & $\checkmark$ \\
\midrule
SecAggMask & $\times$ & $\times$ & $\checkmark$ & $\times$ & $\times$ \\
TopkSecAgg & $\times$ & $\times$ & $\checkmark$ & $\times$ & $\times$ \\
Kvsagg & $\sqrt{}\mkern-9mu{\smallsetminus}$ & $\times$ & $\checkmark$ & $\times$ & $\sqrt{}\mkern-9mu{\smallsetminus}$\\
\midrule
Bit8Quant & $\checkmark$ & $\times$ & $\checkmark$ & $\times$ & $\times$ \\
TernQuant & $\checkmark$ & $\times$ & $\checkmark$ & $\times$ & $\times$\\
\midrule
SVD & $\checkmark$ & $\times$ & $\checkmark$ & $\times$  & $\times$\\
CoLR & $\checkmark$ & $\checkmark$ & $\checkmark$ & $\times$ & $\times$\\
\midrule
\textbf{SecEmb} & $\checkmark$ & $\checkmark$ & $\checkmark$ & $\checkmark$ & $\checkmark$ \\
 \bottomrule
\end{tabular}
\end{small}
\end{table*}

\section{Additional Experiment Evaluation}

\subsection{Download Communication Cost}
\label{app:downcost}
The download communication cost is presented in Table \ref{tab:downcommucost}. We compare our SecEmb against secure FedRec that downloads the full model for local update. For MF and NCF, our protocol reduces costs by approximately 4x to 90x, depending on dataset item size. For FM, SecEmb achieves overhead reductions ranging from roughly 10x to 116x. In DeepFM, cost savings are modest (within 2x) for ML100K and ML1M (item size less than 4k) but become more significant as item size increases, reaching around 20x for the Yelp dataset.

\begin{table*}[htp]
\caption{Download communication cost (in MB) per user for SecEmb and Secure FedRec in one iteration.  Reduction ratio is computed as the communication overhead of Secure FedRec by that of SecEmb.}
\label{tab:downcommucost}
\centering
\begin{small}
\scalebox{0.95}{
\begin{tabular}{llccccc} % 使用 tabularx 环境
\toprule
 & & ML100K  & ML1M & ML10M  &ML25M & Yelp\\ 
  & & (1.7k Items)  & (3.9k Items) & (10.7k Items) &(62.4k Items)& (93.4k Items)\\ 
 \midrule
\multirow{3}{*}{MF} & Secure FedRec & 0.43 & 0.99 & 2.73 & 15.98 & 23.91 \\
& SecEmb & 0.10 & 0.15 & 0.15 & 0.26 & 0.26 \\
& Reduction Ratio & 4.21 & 6.47 & 17.80 & 62.42 & 93.39\\
 \midrule
\multirow{3}{*}{NCF} & Secure FedRec & 0.22 & 0.50 & 2.06 & 11.99 & 14.95 \\
& SecEmb & 0.05 & 0.08 & 0.12 & 0.20 & 0.44 \\
& Reduction Ratio & 4.04 & 6.28 & 16.96 & 60.55 & 91.00 \\
 \midrule
\multirow{3}{*}{FM} & Secure FedRec & 1.12 & 1.74 & 3.59 & 20.97 & 29.88 \\
& SecEmb & 0.11 & 0.16 & 0.16 & 0.26 & 0.26 \\
& Reduction Ratio & 10.50 & 11.00 & 22.76 & 80.32 & 116.50 \\
 \midrule
\multirow{3}{*}{DeepFM} & Secure FedRec & 8.14 & 5.15 & 5.17 & 22.55 & 31.20 \\
& SecEmb & 7.12 & 3.57 & 1.74 & 1.84 & 1.57 \\
& Reduction Ratio & 1.14 & 1.44 & 2.98 & 12.26 & 19.84 \\
\bottomrule
\end{tabular}
}
\end{small}
% \vspace{2mm} % Adjust vertical space as needed
\end{table*}

\subsection{Training Memory and Storage Analysis}
In Figure \ref{fig:trainmemory} we present the training memory and storage cost for two cases: (1) SecEmb where each user utilizes merely the related item embeddings for model training. (2) Secure FedRec where users maintain the full model for training. It can be observed that SecEmb leads to substantial saving in memory and storage cost when the sparse item embedding matrix dominates the model parameters. For memory cost, the average savings are 12x, 21x, 101x, and 214x for ML1M, ML10M, ML25M, and Yelp, respectively. For storage cost, the average savings are 13x, 23x, 111x, and 247x for ML1M, ML10M, ML25M, and Yelp, respectively.

\begin{figure*}[htp]
\begin{center}
  \centering    \centerline{\includegraphics[width=1\linewidth]{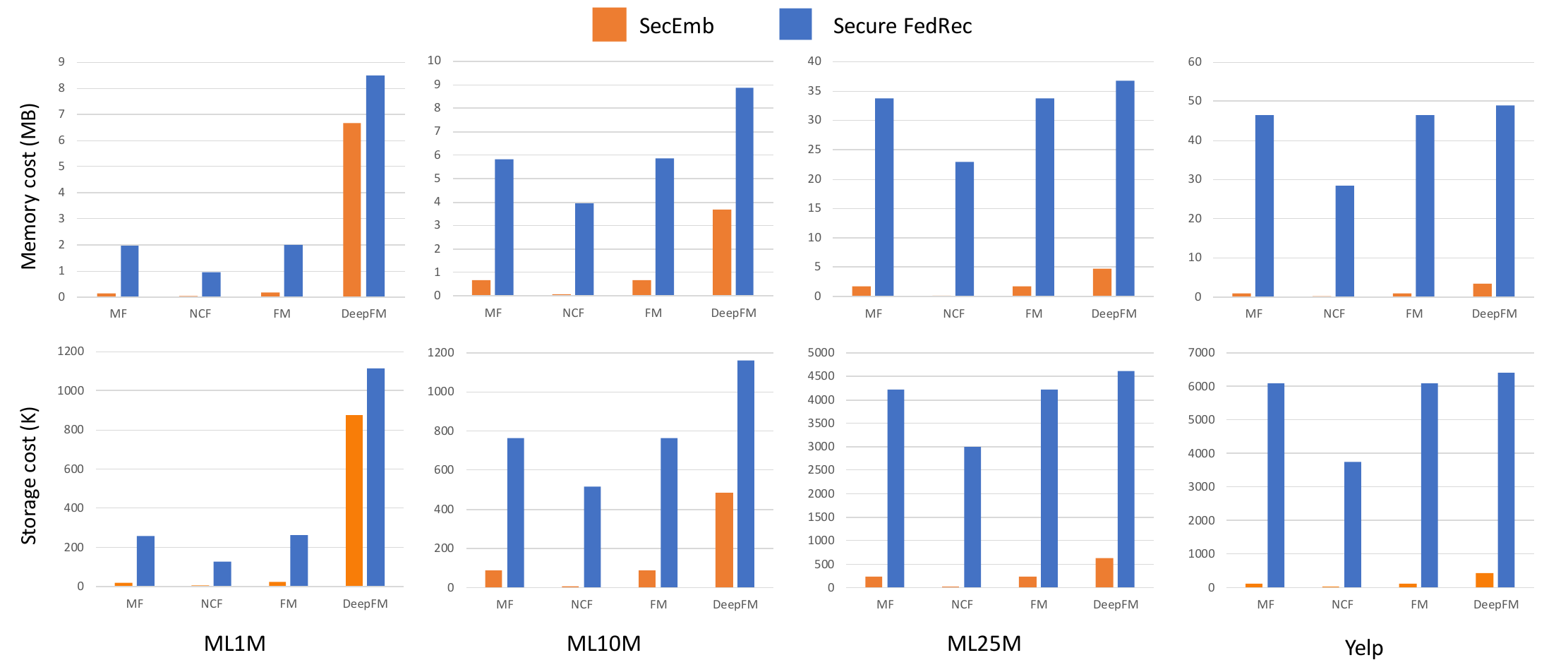}}
\caption{Average memory cost (in MB) and storage cost (in $10^3$ parameters) per user during training phase. The memory cost is computed with batch size of 1.}
\label{fig:trainmemory}
\end{center}
\end{figure*}

\subsection{Comparison with Sparse Aggregation Protocol}
\label{app:sparsesecagg}
In this section, we discuss the advantages of our SecEmb over existing sparse aggregation protocols. We consider two SOTA frameworks, Secure Aggregation with Mask Sparsiﬁcation (SecAggMask) \cite{liu2023efficient} and Top-k Sparse Secure Aggregation (TopkSecAgg) \cite{lu2023top}. The key problem with the two frameworks is that they fail to ensure that the server learns nothing except the aggregated gradients. In particular:
\begin{itemize}
    \item Leakage of rated item index. For SecAggMask, each user transmits the union of gradients with non-zero updates and masks to the server. For TopkSecAgg, each user is required to upload the coordinate set of non-zero gradients along with a small portion of perturbed coordinates. In both methods, the server could narrow down the potential rated items to a much smaller set.
    \item Leakage of gradient values. While TopkSecAgg protects the values of non-zero updates against the server, SecAggMask would reveal the plaintext values to the server. Specifically, SecAggMask randomly masks a portion of the gradients to reduce communication cost, and fails to ensure that all non-zero gradients would be masked against any attackers.
\end{itemize}

As the above sparse aggregation protocols focus on minimizing the transmission during update aggregation stage, we compare the upload communication between our SecEmb and these protocols in Table \ref{tab:commucostsparse}. For SecAggMask, we adopt a mask threshold such that 60\% non-zero gradients would be masked in expectation. For TopkSecAgg, we set the perturbation proportion $\mu$ to be 0.1, following \cite{lu2023top}. Both approaches result in higher communication cost than SecEmb because: (1) Besides the non-zero embedding gradients, SecAggMask requires the user to send a certain proportion of randomly masked zero updates to the server. (2) To cancel out the mask values, in TopkSecAgg each user sends the union of rated item embeddings for all participating user, rather than the those for each single user.
\begin{table*}[h!]
\caption{Communication cost (in MB) per user for SecEmb and Sparse SecAgg during upload transmission in one iteration. General SecAgg adopts two-server ASS that has minimum communication overhead.}
\label{tab:commucostsparse}
\centering
\begin{small}
\scalebox{0.95}{
\begin{tabular}{llccccc} % 使用 tabularx 环境
\toprule
 & & ML100K  & ML1M & ML10M  &ML25M & Yelp\\ 
  & & (1.7k Items)  & (3.9k Items) & (10.7k Items) &(62.4k Items)& (93.4k Items)\\ 
 \midrule
\multirow{4}{*}{MF} & General SecAgg & 0.87 & 2.02 & 5.55 & 32.46 & 48.56 \\
& SecAggMask & 0.27 & 0.61 & 1.66 & 9.60 & 14.35 \\
 & TopkSecAgg & 0.32 & 0.66 & 0.91 & 1.17 & 1.95 \\
 &Kvsagg & 1.21 & 2.49 & 3.41 & 4.37 & 7.31\\
& \textbf{SecEmb} & 0.17 & 0.27 & 0.28 & 0.51 & 0.52 \\
 \midrule
\multirow{4}{*}{NCF} & General SecAgg & 0.45 & 1.03 & 4.20 & 24.48 & 30.64 \\
& SecAggMask & 0.14 & 0.31 & 1.25 & 7.21 & 8.98 \\
 & TopkSecAgg & 0.17 & 0.34 & 0.69 & 0.89 & 1.23 \\
 &Kvsagg & 0.61 & 1.25 & 2.57 & 3.29 & 4.58 \\
& \textbf{SecEmb} & 0.13 & 0.20 & 0.26 & 0.46 & 0.43 \\
 \midrule
\multirow{4}{*}{FM} & General SecAgg & 0.93 & 2.04 & 5.56 & 32.47 & 48.57 \\
& SecAggMask & 0.27 & 0.61 & 1.66 & 9.60 & 14.35 \\
 & TopkSecAgg & 0.32 & 0.66 & 0.91 & 1.17 & 1.95 \\
 &Kvsagg & 1.21& 2.49 & 3.41 & 4.37 & 7.31\\
& \textbf{SecEmb} & 0.22 & 0.29 & 0.29 & 0.52 & 0.53 \\
 \midrule
\multirow{4}{*}{DeepFM} & General SecAgg & 14.96 & 8.87 & 8.72 & 35.63 & 51.20 \\
& SecAggMask & 14.30 & 7.44 & 4.81 & 12.76 & 16.99 \\
 & TopkSecAgg & 14.36 & 7.49 & 4.07 & 4.32 & 4.58 \\
 &Kvsagg & 15.25 & 9.31 & 6.57 & 7.53 & 9.94\\
& \textbf{SecEmb} & 14.26 & 7.12 & 3.45 & 3.68 & 3.16 \\
\bottomrule
\end{tabular}
}
\end{small}
% \vspace{2mm} % Adjust vertical space as needed
\end{table*}

\subsection{Server Computation Cost}
\label{app:servercost}
To validate the practicality of SecEmb, we evaluate the server computation cost under increasing number of devices during the training stage. Note that the server computation can be speed up utilizing the efficient full domain evaluation in \cite{boyle2016function}. Table \ref{tab:servercost} compares the computation time of our framework with homomorphic encryption (HE) approaches with CKKS cryptosystem \cite{cheon2017homomorphic}. Though both frameworks scales linearly with the number of participating devices, SecEmb is approximately 1600x faster than the typical HE protocol on average.

\begin{table*}[h!]
\caption{Server computation cost (in minutes) per iteration for ML1M dataset.}
 % Public-to-Denoise Ratio is defined as the ratio of the size of the entire public model to that of the denoise model. 
\label{tab:servercost}
\centering
\begin{small}
\scalebox{0.95}{
\begin{tabular}{lccccc} % 使用 tabularx 环境
\toprule
 \# of Active Users & 100 & 200 & 300 & 400 & 500 \\ 
 \midrule
HE (CKKS) & 127.26 & 244.51 & 391.76 & 519.05 & 646.38 \\ 
\textbf{SecEmb} & 0.08 & 0.16 & 0.25 & 0.33 & 0.41 \\ 
\bottomrule
\end{tabular}}
\end{small}
% \vspace{2mm} % Adjust vertical space as needed
\end{table*}

\subsection{Breakpoint Analysis of Communication Cost}
\label{app:breakpoint}

In Section \ref{sec:comthe} we show that SecEmb offers advantages over the secure FedRec in terms of upload communication as long as $m'<mbd/\left((\lambda +2)\log m + bd\right)$. The inequality usually holds for recommender system with sparse update. Note that the benefit on download communication is obvious where most users should rate over 50\% of the total items to break the inequality—an unrealistic scenario in practice.

Table \ref{tab:breakpoint} presents the maximum number of $m'$ for each dataset where the aforementioned inequality holds. We use security parameter $\lambda=128$ and 32-bit precision $b=32$. It can be observed that the breakpoint of $m'$ is sufficiently large, over 50\% of the total item size $m$. It is highly improbable for a user to rate such a substantial proportion of items in practical scenarios.

\begin{table*}[h!]
\caption{Maximum Value of $m'$ for Communication Cost Advantage over secure FedRec under Various Embedding Dimension $d$.}
\label{tab:breakpoint}
\centering
\begin{small}
\scalebox{0.95}{
\begin{tabular}{lccccc} % 使用 tabularx 环境
\toprule
 & ML100K  & ML1M & ML10M  &ML25M & Yelp\\ 
  & (1.7k Items)  & (3.9k Items) & (10.7k Items) &(62.4k Items)& (93.4k Items)\\ 
 \midrule
$d=64$ & 1001 & 2210 & 5775 & 31038 & 45418 \\
$d=128$ & 1255 & 2817 & 6408 & 37453 & 56031 \\
$d=512$ & 1550 & 3547 & 9655 & 55418 & 82495 \\
\bottomrule
\end{tabular}
}
\end{small}
% \vspace{2mm} % Adjust vertical space as needed
\end{table*}

\subsection{Hyperparameters of Sequence Recommendation}
\label{app:hyperseq}
We filter out users with less than 3 ratings in Amazon dataset (\url{https://cseweb.ucsd.edu/~jmcauley/datasets/amazon/links.html}), resulting in 9,267,503 items and 6,775,277 users. In each training round, 100 clients participate, and approximately 50,000 rounds are required to train the model on this extensive dataset. SecEmb and the four message compression baselines utilize the same hyperparameters as those used in prior sequence models. For SVD and CoLR, the rank of the reduced item embedding update matrix is set to 4 for the Amazon dataset and 8 for ML1M.

\subsection{Ablation Study}
\label{app:ablation}
We compare SecEmb with two variants that eliminates one or two optimizations describe in Section \ref{sec:opt}. Table \ref{tab:ablationcommun} and \ref{tab:ablationcomp} present the per user upload communication cost and computation cost, respectively. The initial construction of SecEmb incurs over 20x higher communication and computation costs on average compared to the improved version, and the cost can be higher than that for secure FedRec on dataset with lower item size ($m<$11k). Furthermore, sharing the binary path between the two modules reduces the communication and computation cost by around 1.5x. 

\begin{table}[h!]
\caption{Upload communication cost (in MB) for SecEmb and its variants.}
\label{tab:ablationcommun}
\centering
% \begin{small}
\scalebox{0.95}{
\begin{tabular}{llccccc} % 使用 tabularx 环境
\toprule
 & &ML100K  & ML1M & ML10M  &ML25M & Yelp\\ 
 \midrule
\multirow{4}{*}{MF} & Secure FedRec & 0.86 & 1.99 & 5.47 & 31.96 & 47.81\\
 & SecEmb-Init & 4.56 & 7.60 & 8.51 & 16.83 & 17.43 \\
 &SecEmb-RowEnc & 0.28 & 0.43 & 0.45 & 0.78 & 0.79 \\
 &SecEmb & 0.17 & 0.27 & 0.28 & 0.51 & 0.52\\
 \midrule
\multirow{4}{*}{NCF} & Secure FedRec & 0.44 & 1.00 & 4.11 & 23.98 & 29.89 \\
 & SecEmb-Init &  2.32 & 3.86 & 6.42 & 12.70 & 11.00 \\
 &SecEmb-RowEnc & 0.18 & 0.28 & 0.38 & 0.67 & 0.61 \\
 &SecEmb & 0.13 & 0.20 & 0.26 & 0.46 & 0.44 \\
\midrule
\multirow{4}{*}{FM} & Secure FedRec & 0.91 & 2.01 & 5.48 & 31.97 & 47.82\\
 & SecEmb-Init &  4.80 & 7.84 & 8.74 & 17.16 & 17.47 \\
 &SecEmb-RowEnc & 0.29 & 0.44 & 0.46 & 0.80 & 0.79 \\
 &SecEmb & 0.18 & 0.28 & 0.29 & 0.53 & 0.53 \\
\midrule
\multirow{4}{*}{DeepFM} & Secure FedRec & 14.95 & 8.84 & 8.63 & 35.13 & 50.45\\
 & SecEmb-Init &  18.84 & 14.66 & 11.89 & 20.23 & 20.10 \\
 &SecEmb-RowEnc & 14.33 & 7.27 & 3.61 & 3.95 & 3.43 \\
 &SecEmb & 14.22 & 7.10 & 3.45 & 3.68 & 3.16 \\
\bottomrule
\end{tabular}
}
% \end{small}
\vspace{-2mm} % Adjust vertical space as needed
\end{table}

\begin{table}[h!]
\caption{Computation cost (in milliseconds) for secret generation per user for SecEmb and its variants.}
\label{tab:ablationcomp}
\centering
\scalebox{0.95}{
\begin{tabular}{llccccc} % 使用 tabularx 环境
\toprule
 & &ML100K  & ML1M & ML10M  &ML25M & Yelp\\ 
 \midrule
\multirow{4}{*}{MF} & Secure FedRec & 0.79 & 1.80 & 4.91 & 28.67 & 51.04 \\
 & SecEmb-Init & 5.34 & 15.35 & 22.56 & 28.53 & 26.43 \\
 &SecEmb-RowEnc & 0.37 & 0.90 & 0.91 & 1.34 & 1.47 \\
 &SecEmb & 0.31 & 0.47 & 0.47 & 0.72 & 0.74 \\
 \midrule
\multirow{4}{*}{NCF} & Secure FedRec & 0.40 & 0.92 & 3.71 & 21.68 & 27.09 \\
 & SecEmb-Init &  5.50 & 8.76 & 15.03 & 25.95 & 21.76 \\
 &SecEmb-RowEnc & 0.57 & 0.65 & 0.70 & 1.27 & 1.15 \\
 &SecEmb & 0.37 & 0.44 & 0.48 & 0.81 & 0.80 \\
\midrule
\multirow{4}{*}{FM} & Secure FedRec & 0.83 & 1.82 & 4.91 & 28.65 & 51.14 \\
 & SecEmb-Init &  18.30 & 19.04 & 29.02 & 29.28 & 37.95 \\
 &SecEmb-RowEnc & 0.69 & 0.75 & 0.70 & 1.07 & 1.15 \\
 &SecEmb & 0.46 & 0.50 & 0.48 & 0.74 & 0.74 \\
\midrule
\multirow{4}{*}{DeepFM} & Secure FedRec & 13.16 & 7.84 & 7.69 & 38.84 & 53.90\\
 & SecEmb-Init &  38.94 & 35.74 & 37.78 & 34.85 & 39.94\\
 &SecEmb-RowEnc & 13.15 & 7.88 & 4.84 & 5.48 & 6.33 \\
 &SecEmb & 12.99 & 6.67 & 3.45 & 3.77 & 3.54 \\
\bottomrule
\end{tabular}
}
\vspace{-2mm} % Adjust vertical space as needed
\end{table}

%%%%%%%%%%%%%%%%%%%%%%%%%%%%%%%%%%%%%%%%%%%%%%%%%%%%%%%%%%%%%%%%%%%%%%%%%%%%%%%
%%%%%%%%%%%%%%%%%%%%%%%%%%%%%%%%%%%%%%%%%%%%%%%%%%%%%%%%%%%%%%%%%%%%%%%%%%%%%%%
\section{Discussion}
\label{app:discuss}
\subsection{Assumption of Two Non-Colluding Servers}
Our protocol relies on two non-colluding servers for security. Note that user privacy is guaranteed as long as one of the servers is honest, even if the other colludes with any number of clients. In practice, the two servers can be: (1) a cloud service provider who makes the recommendation, and a third party who provides the cryptography or evaluation service; or (2) two third parties who provide the cryptography or evaluation service. 

While two non-colluding parties is a common assumption in multi-party computation (MPC) protocols \cite{corrigan2017prio, addanki2022prio+, boneh2021lightweight, mohassel2017secureml}, it would be valuable to increase the colluding threshold of our protocol for enhanced security. A multi-party FSS for point function with full threshold $t>2$ achieves a key size subpolynomial in domain size \cite{boyle2015function, boyle2022information}, as opposed to the exponential reduction in the two-party case. For item embedding updates, the communication cost scales roughly with the square root of the item size $m$, rather than logarithmic in $m$ as in the two-party setting. A scalable solution with an increased collusion threshold requires a more communication-efficient scheme.

\subsection{Practical Considerations for Implementation of SecEmb}
First, users might drop out during the implementation of our protocol. Our algorithm consists of two communication rounds per iteration: one for private embedding retrieval and another for secure update aggregation. Let $\mathcal{U}_1$ and $\mathcal{U}_2$ denote the participating users in these rounds, with $\mathcal{U}_2 \subseteq \mathcal{U}_1$. If users drop out in the update aggregation phase, servers could simply aggregate the gradients from users in $\mathcal{U}_2$.  Since each user independently generates their upload messages, the update aggregation on the retaining users remains unaffected. 

Second, the aggregated gradients and updated models should be within the range $[-R, R]$ to avoid overflow in group operation. Consider $n_p$ participating users in each iteration, each user's input should be constrained to $[-R_u, R_u]$, where $R_u=(R-1)/n_p$. In SecEmb, this can be enforced by limiting the size of $CW^{(n+1)}$ to $\lfloor \log_2 R_u \rfloor \cdot d$ bits, thereby restricting the range of values each user can transmit.

Finally, we consider a dynamic setting where users rate new items during the training process. Since $m'$ is a pre-specified, unified value determined at the start based on the average $m_u'$, additionally ratings might increase each user's $m_u'$, potentially making $m'$ insufficient for most users. To address this issue, we periodically update $m'$ based on the average of $m_u'$. While a larger $m'$ increases user payload, the efficiency gains remain substantial as long as $m'$ is sufficiently small compared with $m$.

\subsection{Application to Language Model Training}
Our framework optimizes training efficiency in the embedding layer, making it applicable beyond RecSys to models with sparse embeddings updates, such as language models (LM) \cite{dubey2024llama, sanh2019distilbert, chen2024mark, chen2024your, chen2025improved}. In the federated training of LM, users retrieve related token embeddings via our private embedding retrieval protocol, update the model locally, and upload the gradients in our secure update aggregation module. Specifically, each user encodes the relevant token ids (i.e., token ids appear in their local dataset) with FSS keys, and the server computes secret shares of token embeddings for embedding retrieval or aggregates the secret shares of token embedding updates for SecAgg. For scalable federated LM training, it is essential to integrate our framework with other parameter-efficient finetuning methods \cite{han2024parameter, hulora}.

\end{document}